\definecolor{darkgreen}{rgb}{0,.35,0}
\definecolor{darkblue}{rgb}{0,0,.5}
\definecolor{darkred}{rgb}{.6,0,0}
\DeclareMathAlphabet{\mathbold}{OML}{cmm}{b}{it}
\numberwithin{equation}{section}
\theoremstyle{plain}
\newtheorem{theorem}{Theorem}[section]
\newtheorem{lemma}[theorem]{Lemma}
\newtheorem{definition}[theorem]{Definition}
\newcommand{\tinyspace}{\mspace{1mu}}
\newcommand{\gf}[1]{\ensuremath{\mathbb{F}_{#1}}}
\newcommand{\M}{\ensuremath{\mathsf{M}}}
\newcommand{\E}{\ensuremath{\mathsf{E}}}
\newcommand{\F}{\ensuremath{\mathsf{F}}}
\newcommand{\NN}{\ensuremath{\mathbb{N}}}
\newcommand{\ZZ}{\ensuremath{\mathbb{Z}}}
\newcommand{\RR}{\ensuremath{\mathbb{R}}}
\newcommand{\CC}{\ensuremath{\mathbb{C}}}
\newcommand{\softO}{{O\mskip1mu\tilde{\,}\mskip1mu}}
\newcommand{\norm}[1]{\left\lVert\tinyspace#1\tinyspace\right\rVert_2}
\newcommand{\ndivs}{\nmid}
\newcommand{\abs}[1]{\left\lvert\tinyspace #1 \tinyspace\right\rvert}
\newcommand{\ee}{\mathbf{e}}
\newcommand{\ii}{\mathbf{i}}
\newcommand{\expnot}[2]{#1\,e$-#2$}
\DeclareMathOperator{\rem}{rem\mskip2mu}
\DeclareMathOperator{\llog}{loglog}
\begin{document}

\title{Diversification improves interpolation}
\author{Mark Giesbrecht \and Daniel S. Roche}

\maketitle

\begin{abstract}
We consider the problem of interpolating an unknown multivariate
polynomial with coefficients taken from a finite field or as
numerical approximations of complex numbers. Building on the
recent work of Garg and Schost, we improve on the best-known
algorithm for interpolation over large finite fields by presenting
a Las Vegas randomized algorithm that uses fewer black box
evaluations. Using related techniques, we also address numerical
interpolation of sparse polynomials with complex coefficients, and
provide the first provably stable algorithm (in the sense of
relative error) for this problem, at the cost of modestly more
evaluations.  A key new technique is a randomization which makes
all coefficients of the unknown polynomial distinguishable,
producing what we call a \emph{diverse} polynomial.  Another
departure from most previous approaches is that our algorithms do
not rely on root finding as a subroutine. We show how these
improvements affect the practical performance with trial
implementations.


\end{abstract}

\section{Introduction}

Polynomial interpolation is a long-studied and important problem in
computer algebra and symbolic computation. Given a way to evaluate an
unknown polynomial at any chosen point, and an upper bound on the
degree, the interpolation problem is to determine a representation for
the polynomial. In \emph{sparse} interpolation, we are also given an
upper bound on the number of nonzero terms in the unknown polynomial,
and the output is generally returned in the sparse (also
\emph{lacunary} or \emph{supersparse}) representation, wherein only
the nonzero terms are explicitly stored.

Applications of sparse interpolation include the
manipulation and factorization of multivariate polynomials and system
solving (see, e.g.,
\cite{CanKalLak89,KalTra90,DiaKal95,DiaKal98,JavMon07,JavMon09}.  With
the advent of hybrid symbolic-numeric algorithms for (systems of)
multivariate polynomials with approximate coefficients, we find
applications of approximate sparse interpolation, in particular for
solving non-linear systems of equations (see, e.g.,
\cite{SomVerWam01,SomVerWam04,Ste04}) and factoring approximate
multivariate polynomials (see, e.g., \cite{KalMayYanZhi08}).

Sparse interpolation is also a non-trivial generalization of the
important problem of \emph{polynomial identity testing}: given a
black box (especially an algebraic circuit) computing an unknown 
polynomial, determine whether the polynomial is zero. A relevant result
in our setting of sparse polynomials is \cite{BHLV09}; for a more
in-depth discussion, we recommend the recent survey by
\cite{Sax09}.

Here we examine the sparse interpolation problem in two settings which
have received recent attention: when the coefficients are elements of
finite fields (particularly large finite fields, over which we have no
choice) and when they are
approximations to complex numbers. We give improvements over the state
of the art in both of these cases, and demonstrate our new algorithms
in practice with a full implementation in C++.

\subsection{Problem definition}

Let $\F$ be a field.  A multivariate polynomial
$f\in\F[x_1,\ldots,x_n]$ is said to be $t$-sparse for some $t\in\NN$
if $f$ has at most $t$ nonzero terms in the standard power basis; that
is, $f$ can be written
\[
f = \sum_{i=1}^t c_i x_1^{e_{i1}} x_2^{e_{i2}} \cdots
x_n^{e_{in}} 
\] 
for coefficients $c_i\in\F$ and exponent tuples
$(e_{i1},\ldots,e_{in})\in\NN^n$ for $1\leq i\leq t$.
If each $e_i < d$, then the size of this representation is
$O(t)$ field elements plus $O(t n \log d)$ bits.
We seek algorithms which are
polynomial-time in the size of this representation.

Let $f\in\F[x_1,\ldots,x_n]$ have degree less than $d$. A \emph{black box} for
$f$ is a function which takes as input a vector
$(a_1,\ldots,a_n)\in\F^n$ and produces $f(a_1,\ldots,a_n)\in\F$.  The
cost of the black box is the number of operations in $\F$ required to
evaluate it at a given input.

\cite{CDGK91} showed that, if only evaluations over the ground field
$\F$ are allowed, then for some instances at least 
$\Omega(n^{\log t})$ black box probes are required. Hence if we seek
polynomial-time algorithms, we must extend the capabilities of the black
box.
To this end, \cite{DiaKal98} introduced the idea of an \emph{extended
  domain black box} which is capable of evaluating $f(b_1,\ldots,b_n)\in\E$ for
any $(b_1,\ldots,b_n)\in\E^n$ where $\E$ is any extension field of $F$.
That is, we can change every operation in the black box to work over an
extension field, usually paying an extra cost per evaluation
proportional to the size of the extension.

Motivated by the case of black boxes that are
division-free algebraic circuits, we will use the following 
model which we believe to be fair and cover all previous relevant
results. Here and for the remainder, $\M(m)$ is the number of field
operations required to multiply two univariate
polynomials with degrees less than
$m$, and $\softO(m)$ represents any function bounded by 
$m(\log m)^{O(1)}$. Using \cite{CK91}, 
$\M(m)\in O(m\log m\llog m)$, which is $\softO(m)$.

\begin{definition}\label{def:bb}
  Let $f\in\F[x_1,\ldots,x_n]$ and $\ell>0$.
  A \emph{remainder black box} for $f$ with size $\ell$ is a procedure
  which, given any monic square-free polynomial $g\in\F[y]$ with $\deg g=m$,
  and any $h_1,\ldots,h_n\in\F[y]$ with each $\deg h_i < m$,
  produces $f(h_1,\ldots,h_n) \rem g$ using at most
  $\ell \cdot \M(m)$ operations in $\F$.
\end{definition}

This definition is general enough to cover the algorithms we know of
over finite fields, and we submit that the cost model is fair to the
standard black box, extended domain black box, and algebraic circuit
settings. The model makes sense over complex numbers,
as we will see.


\subsection{Interpolation over finite fields}

We first summarize previously known univariate interpolation algorithms 
when $\F$ is a finite field with $q$ elements and identify our new
contributions here. For now, let $f\in\gf{q}[x]$ have degree less than
$d$ and
sparsity $t$. We will assume we have a remainder black box for $f$ with
size $\ell$. Since field elements can be represented with $O(\log q)$
bits, a polynomial-time algorithm will have cost polynomial in $\ell$,
$t$, $\log d$, and $\log q$.

For the dense output representation, one can use the classical method of
Newton/Waring/Lagrange to interpolate in $\softO(\ell d)$ time
\citep[\S 10.2]{vzGG03}.

The algorithm of \citet{BT88} for sparse polynomial interpolation,
and in particular the version developed by \citet{KY89}, can be
adapted to arbitrary finite fields. Unfortunately, these algorithms
require $t$ discrete logarithm computations in $\gf{q}^*$, whose cost is
small if the field size $q$ is chosen carefully (as in \citet{Kal10a}),
but not in general. For arbitrary (and potentially large) $q$, we can
take advantage of the fact that each discrete logarithm that needs to be
computed falls in the range $[0,1,\ldots,d-1]$.
The ``kangaroo method'' of \citet{Pol78,Pol00} can, with high
probability, compute such a discrete log with $O(\sqrt{d})$ field
operations.
Using this algorithm makes brings the total worst-case cost of Ben-Or
and Tiwari's algorithm to
$O(t\ell + t^2 + t\sqrt{d})$.

The current study builds most directly on the work of \cite{GS09}, who
gave the first polynomial-time algorithm for sparse interpolation over
an arbitrary finite field.
Their algorithm works roughly as follows. For
very small primes $p$, use the black box to compute $f$ modulo
$x^p-1$. A prime $p$ is a ``good prime'' if and only if all the terms of
$f$
are still distinct modulo $x^p-1$. 
If we do this for all $p$ in the range of roughly
$O(t^2\log d)$, then there will be sufficient good primes to recover the
unique symmetric polynomial over $\ZZ[y]$ whose roots are the exponents of
nonzero terms in $f$. We then factor this polynomial to find those
exponents, and correlate with any good prime image to
determine the coefficients. The total cost is
$\softO(\ell t^4 \log^2 d)$ field operations. Using randomization, it is
easy to reduce this to $\softO(\ell t^3 \log^2 d)$.

Observe that the coefficients of the symmetric integer polynomial in
Garg \& Schost's algorithm are bounded by $O(d^t)$, which is much larger than the
$O(d)$ size of the exponents ultimately recovered.
Our primary contribution over finite fields of size
at least $\Omega(t^2 d)$
is a new algorithm which
avoids evaluating the symmetric polynomial and performing root finding
over $\ZZ[y]$. As a result, we reduce the total number of required
evaluations and develop a randomized algorithm with cost
$\softO(\ell t^2 \log^2 d)$, which is roughly quadratic in the input
and output sizes. Since this can be deterministically verified in 
the same time, our algorithm (as well as the randomized version of Garg
\& Schost) is of the Las Vegas type.

\begin{table}[tp]
\noindent\makebox[\textwidth]{
\begin{tabular*}{1.2\textwidth}{@{\extracolsep{\fill}}c|c|c|c|c}
  & Probes & Probe degree & Computation cost & Total cost \\
  \hline
  Dense & $d$ & $1$ & $\softO(d)$ & $\softO(\ell d)$ \\
  Ben-Or \& Tiwari & $O(t)$ & $1$ & $O(t^2 + t\sqrt{d})$ &
    $\softO(\ell t + t^2 + t\sqrt{d})$ \\
  Garg \& Schost & $\softO(t^2\log d)$ & $\softO(t^2\log d)$ &
    $\softO(t^4 \log^2 d)$ & $\softO(\ell t^4 \log^2 d)$ \\
  Randomized G \& S & $\softO(t \log d)$ & $\softO(t^2 \log d)$ &
    $\softO(t^3\log^2 d)$ & $\softO(\ell t^3\log^2 d)$ \\
  Ours & $O(\log d)$ & $\softO(t^2\log d)$ & $\softO(t^2\log^2 d)$ &
    $\softO(\ell t^2\log^2 d)$ \\
\end{tabular*}}
\caption{Sparse univariate interpolation over large finite
fields,\\
with black box size $\ell$, degree $d$, and $t$ nonzero terms%
\label{table:prevff}}
\end{table}

The relevant previous results mentioned above
are summarized in
Table~\ref{table:prevff}, where we assume in all cases that the field size $q$
is ``large enough''. In the table, the ``probe degree'' refers to
the degree of $g$ in each evaluation of the remainder black box as
defined above.

\subsection{Multivariate interpolation}

Any of the univariate algorithms above can be used to generate a
multivariate polynomial interpolation algorithm in at least two
different ways. For what follows, write $\rho(d,t)$ for the number of
remainder black box evaluations required by some univariate
interpolation algorithm, $\Delta(d,t)$ for the degree of the remainder
in each evaluation, and $\psi(d,t)$ for the number of other field
operations required besides black box calls. Observe that these
correspond to the first three columns in Table~\ref{table:prevff}.

The first way to adapt a univariate interpolation algorithm to a
multivariate one is Kronecker
substitution: given a remainder black box for an unknown 
$f\in\F[x_1,\ldots,x_n]$, with each partial degree less than $d$,
we can easily construct a remainder black box for the univariate
polynomial
$\hat f = f(x, x^d, x^{d^2},\ldots,x^{d^{n-1}})\in\F[x]$, whose terms
correspond one-to-one with terms of $f$.  
This is the approach taken for instance in 
\citet[][\S 2]{Kal10a} for the interpolation of multivariate polynomials
with rational coefficients.
The cost is simply the cost of
the chosen underlying univariate algorithm, with the degree increased to
$d^n$.

The other method for constructing a multivariate interpolation algorithm
is due to \citet{Zip90}. The technique is inherently
probabilistic and works variable-by-variable, at each step solving a 
number of $\hat t \times \hat t$ transposed Vandermonde systems, for
some $\hat t \leq t$. Specifically, each system is of the form
$A x = b$, where $A$ is a $\hat t \times \hat t$ matrix of scalars from
the coefficient field $\gf{q}$. The vector $v$ consists of the output of
$\hat t$ remainder black box evaluations, and so its elements are in
$\gf{q}[y]$, and the system must be solved modulo some $g\in\gf{q}[y]$,
as specified by the underlying univariate algorithm. Observe however
that since $A$ does not contain polynomials, computing $x=A^{-1}b$
requires no modular polynomial arithmetic. In fact, using the same
techniques as \citet[][\S 5]{KY89}, employing fast dense bivariate
polynomial arithmetic, each system can be solved using
$$O\Big(\M\big(t\cdot \Delta(d,t)\big)\cdot\log\big(t\cdot
\Delta(d,t)\big)\Big)$$ field operations.

Each transposed Vandermonde system gives the remainder black box
evaluation of each of $\hat t$ univariate polynomials that we are
interpolating in that step. The number of such systems that must be
solved is therefore $\rho(d,t)$, as determined by the underlying
univariate algorithm.
Finally, each of the $\hat t$ univariate
interpolations proceeds with the given evaluations. 
The total cost, over all iterations, is
$$\softO\big(\ell nt\cdot\Delta(d,t)\cdot\rho(d,t)\big)$$
field operations for the remainder black box evaluations, plus
$$\softO\big(nt\psi(d,t) + \ell nt\cdot\Delta(d,t)\big)$$
field operations for additional computation.
\citet{Zip90} used the dense algorithm
for univariate interpolation; using Ben-Or and Tiwari's algorithm
instead was studied by \citet{KalLee03}.

\begin{table}[tp]
\noindent\makebox[\textwidth]{
\begin{tabular}{c|c|c}
  & Kronecker & Zippel \\
  \hline
  Dense & $\softO(\ell d^n)$ & $\softO(\ell ntd)$ \\
  Ben-Or \& Tiwari & $\softO(\ell t + t^2 + t d^{n/2})$ &
    $\softO(nt^3 + nt^2\sqrt{d} + \ell nt^2)$ \\
  Garg \& Schost & $\softO(\ell n^2 t^4 \log^2 d)$ &
    $\softO(\ell nt^5 \log^2 d)$ \\
  Randomized G \& S & $\softO(\ell n^2 t^3 \log^2 d)$ &
    $\softO(\ell nt^4 \log^2 d)$ \\
  Ours & $\softO(\ell n^2 t^2 \log^2 d)$ &
    $\softO(\ell nt^3 \log^2 d)$ 
\end{tabular}}
\caption{Sparse multivariate interpolation over large finite
fields,\\
with black box size $\ell$, $n$ variables, degree $d$, and $t$ nonzero terms%
\label{table:mvar}}
\end{table}

Table \ref{table:mvar} summarizes the cost of the univariate algorithms
mentioned above applied to sparse multivariate interpolation over a
sufficiently large finite field, using Kronecker's and Zippel's methods.

For completeness, we mention a few more results on closely related
problems that do not have a direct bearing on the current study.
\citet{GKS90} give a parallel algorithm with small depth but which
is not competitive in our model due to the large number of processors
required.
A practical parallel version of Ben-Or and Tiwari's algorithm has been
developed by
\cite{JavMon10}.
\cite{KLW90} and \cite{AKP06} present modular algorithms for
interpolating polynomials with rational and integer coefficients.
However, their methods do not seem to 
apply to finite fields.

\subsection{Approximate Polynomial Interpolation}

In Section \ref{sec:approx} we consider the case of approximate sparse
interpolation.  Our goal is to provide both a numerically more robust
practical algorithm, but also the first algorithm which is provably
numerically stable, with no heuristics or conjectures.  We define an
``$\epsilon$-approximate black box'' as one which evaluates an unknown
$t$-sparse target polynomial $f\in\CC[x]$, of degree $d$, with
relative error at most $\epsilon>0$.  Our goal is to build a $t$-sparse
polynomial $g$ such that $\norm{f-g}\leq\epsilon\norm{f}$.  A bound on
the degree and sparsity of the target polynomial, as well as
$\epsilon$, must also be provided.  In Section \ref{sec:approx} we
formally define the above problem, and demonstrate
that the problem of sparse interpolation is well-posed.  We then adapt
our variant of the \cite{GS09} algorithm for the approximate
case, prove it is numerically accurate in terms of the relative error of
the output, and analyze its cost.  We also present a full
implementation in Section \ref{sec:impl} and validating experiments.

Recently, a number of numerically-focussed sparse interpolation
algorithms have been presented.  The algorithm of \cite{GLL09} is a
numerical adaptation of \cite{BT88}, which samples $f$ at $O(t)$
randomly chosen roots of unity $\omega\in\CC$ on the unit circle. In
particular, $\omega$ is chosen to have (high) order at least the
degree, and a randomization scheme is used to avoid clustering of
nodes which will cause dramatic ill-conditioning.  A relatively weak
theoretical bound is proven there on the randomized conditioning
scheme, though experimental and heuristic evidence suggests it is much
better in practice.  \cite{CuyLee08} adapt Rutishauser's $qd$
algorithm to alleviate the need for bounds on the partial degrees and
the sparsity, but still evaluate at high-order roots of unity.
Approximate sparse rational function interpolation is considered by
\cite{KalYan07} and \cite{KalYanZhi07}, using the Structured Total
Least Norm (STLN) method and, in the latter, randomization to improve
conditioning.  Approximate sparse interpolation is also considered for
integer polynomials by \cite{Mansour95}, where a polynomial-time
algorithm is presented in quite a different model from ours.  In
particular the evaluation error is absolute (not relative) and the
complexity is sensitive to the bit length of the integer coefficients.

Note that all these works evaluate the polynomial only on the unit
circle.  This is necessary because we allow and expect $f$ to have
very large degree, which would cause a catastrophic loss of precision
at data points of non-unit magnitude.  Similarly, we assume that the
complex argument of evaluation points is exactly specified, which is again
necessary because any error in the argument would be exponentially
magnified by the degree.

The primary contribution of the work in this paper is to provide
an algorithm with both rigorously provable relative error and
good practical performance.  Our algorithm typically requires
$\softO(t^2\log^2d)$ evaluations at primitive roots of unity of order
$\softO(t^2\log d)$ (as opposed to order $d$ in previous approaches).
We guarantee that it finds a $t$-sparse polynomial $g$ such that
$\norm{g-f}\leq 2\epsilon\norm{f}$.  An experimental demonstration of
the numerical robustness is given in Section \ref{sec:impl}.


\section{Sparse interpolation for generic fields}
\label{sec:generic}

Here and for the remainder, we say a polynomial $f$
is \emph{$t$-sparse} if it can be written as a sum of at most $t$
nonzero coefficients times a monomial.
We assume the unknown polynomial $f$ is
always univariate. This is without loss of generality, as we can use the
Kronecker substitution as discussed above. 
The exponential
increase in the univariate degree only corresponds to a factor of $n$
increase in $\log \deg f$, and since our algorithms will ultimately have
cost polynomial in $\log \deg f$, polynomial time is preserved. 

Assume a fixed, unknown, $t$-sparse univariate polynomial $f\in\F[x]$
with
degree at most $d$. We will use a remainder black box for $f$ to evaluate
$f \rem (x^p-1)$ for small primes $p$. We say $p$ is a ``good prime''
if the sparsity of $f \rem (x^p-1)$ is the same as that of $f$ itself
--- that is, none of the exponents are equivalent modulo $p$.

The following lemma shows the size of primes required to randomly choose
good primes with high probability.

\begin{lemma}\label{lem:goodp}
  Let $f\in\F[x]$ be a $t$-sparse polynomial
  with degree $d$,
  and let 
  $\lambda = \max\left(21,\left\lceil \frac{5}{3}t(t-1)\ln d
  \right\rceil\right)$.
  A prime chosen at random in the range
  $\lambda,\ldots,2\lambda$
  is a good prime for $f$ with probability at least $1/2$.
\end{lemma}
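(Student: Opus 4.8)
The plan is to bound the probability that a randomly chosen prime in $[\lambda, 2\lambda]$ is \emph{bad}, i.e.\ that some pair of the $t$ exponents of $f$ collapses modulo $p$. First I would fix the exponents $e_1 < e_2 < \cdots < e_t$ of the nonzero terms of $f$, all lying in $\{0,1,\ldots,d\}$, and for each pair $i < j$ set $D_{ij} = e_j - e_i$, a positive integer at most $d$. A prime $p$ is bad precisely when $p \mid D_{ij}$ for some pair $(i,j)$. For a single difference $D_{ij}$, the number of primes $p > \lambda$ dividing $D_{ij}$ is at most $\log_\lambda D_{ij} \le \log_\lambda d = \ln d / \ln \lambda$, since each such prime exceeds $\lambda$ and they are distinct. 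Summing over the $\binom{t}{2} = t(t-1)/2$ pairs, the total number of bad primes in the range is at most $\frac{t(t-1)}{2}\cdot\frac{\ln d}{\ln \lambda}$.

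Next I would invoke a lower bound on $\pi(2\lambda) - \pi(\lambda)$, the number of primes in $[\lambda, 2\lambda]$. A clean, fully explicit Chebyshev-type estimate — valid for all $\lambda \ge 21$, which is exactly why that constant appears in the definition of $\lambda$ — gives something like $\pi(2\lambda) - \pi(\lambda) \ge \frac{3\lambda}{5\ln\lambda}$ (the precise constant is whatever the authors extract from a Rosser–Schoenfeld-style bound; I would cite the standard reference). The probability that a uniformly random prime in the range is bad is then at most
\[
\frac{\frac{t(t-1)}{2}\cdot\frac{\ln d}{\ln\lambda}}{\frac{3\lambda}{5\ln\lambda}}
= \frac{5\,t(t-1)\ln d}{6\lambda}.
\]
The point of choosing $\lambda \ge \frac{5}{3}t(t-1)\ln d$ is exactly to make this quantity at most $\frac{5 t(t-1)\ln d}{6\cdot \frac{5}{3}t(t-1)\ln d} = \frac{1}{2}$, so a random prime in the range is good with probability at least $1/2$.

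The main obstacle is pinning down the explicit prime-counting inequality with constants that make everything line up: I need a lower bound on the number of primes in $[\lambda,2\lambda]$ that holds for \emph{every} integer $\lambda \ge 21$ (not just asymptotically), since the lemma makes no largeness assumption beyond $21$. I would handle this by citing an effective form of Bertrand's postulate / the prime number theorem with explicit error terms (e.g.\ Rosser and Schoenfeld), checking the small cases $21 \le \lambda \le \lambda_0$ by hand or noting they are subsumed, and then carrying the constant $3/(5\ln\lambda)$ through; one also has to be slightly careful that "prime chosen at random in the range $\lambda,\ldots,2\lambda$" means uniform over the primes in that interval, and that $\lambda$ itself being possibly composite or prime does not affect the count. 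A minor secondary point is the degenerate case $t \le 1$, where there are no pairs and every prime is trivially good, so the bound holds vacuously.
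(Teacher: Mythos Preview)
Your proposal is correct and follows essentially the same argument as the paper: bound the number of bad primes by $\binom{t}{2}\ln d/\ln\lambda$ via the divisor count, invoke the Rosser--Schoenfeld bound $\pi(2\lambda)-\pi(\lambda)\ge 3\lambda/(5\ln\lambda)$ for $\lambda\ge 21$, and compare. The paper's proof is exactly this, with the same constants and the same citation.
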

\begin{proof}
  Write $e_1,\ldots,e_t$ for the exponents of nonzero terms in $f$.
  If $p$ is a bad prime, then $p$ divides $(e_j-e_i)$ for some $i<j$.
  Each $e_j-e_i \leq d$, so there can be at most
  $\log_\lambda d = \ln d / \ln \lambda$ primes that divide each
  $e_j-e_i$. There are exactly $\binom{t}{2}$ such pairs of exponents, so
  the total number of bad primes is at most
  $(t(t-1)\ln d)/(2\ln\lambda)$.

  From \citet[Corollary 3 to Theorem 2]{RosSch62},
  the total number of primes in the range $\lambda,\ldots,2\lambda$ is
  at least $3\lambda/(5\ln \lambda)$ when $\lambda \geq 21$, which is at
  least $t(t-1)\ln d/\ln \lambda$, at least
  twice the number of bad primes.
\end{proof}

Now observe an easy case for the sparse interpolation problem.
If a polynomial $f\in\F[x]$, has all coefficients distinct; that is,
$f=\sum_{1\leq i\leq t} c_ix^{e_i}$ and $c_i=c_j \Rightarrow i=j$, then
we say $f$ is \emph{diverse}. 
To interpolate a diverse polynomial $f\in\F[x]$, 
we first follow the method of
\cite{GS09} by computing $f \rem (x^{p_i} - 1)$ for ``good primes''
$p_i$ such that the sparsity of $f\rem (x^{p_i}-1)$ is the same as that
of $f$. Since $f$ is diverse, $f \rem (x^{p_i}-1)$ is also diverse and
in fact each modular image has the same set of coefficients. Using this
fact, we avoid the need to construct and subsequently factor the
symmetric polynomial in the exponents. Instead, we correlate like terms
based on the (unique) coefficients in each modular image, then use
simple Chinese remaindering to construct each exponent $e_i$ from its
image modulo each $p_i$. This requires only $O(\log d)$ 
remainder black box evaluations at
good primes, gaining a factor of $t$ improvement over the randomized
version of \cite{GS09}
for diverse polynomials.

In the following sections, we will show how to choose an $\alpha\in\F$
so that $f(\alpha x)$ --- which we can easily construct a remainder black
box for --- is diverse. With such a procedure,
Algorithm~\ref{alg:generic}
gives a Monte Carlo algorithm for interpolation over a general field.

\begin{algorithm}[Htb] 
  \caption{Generic interpolation\label{alg:generic}}
  \KwIn{$\mu\in\RR_{>0}$, 
    $T,D,q\in\NN$, and a remainder black box for unknown $T$-sparse
    $f\in\F[x]$ with $\deg f < D$}
  \KwOut{$t\in\NN$, $e_1,\ldots,e_t\in\NN$, and
    $c_1,\ldots,c_t\in\F$ such that $f=\sum_{1\leq i\leq
    t}c_ix^{e_i}$}
  $t \gets 0$\;
  $\lambda \gets \max\left(21,\left\lceil
  \frac{5}{3}T(T-1)\ln D \right\rceil\right)$\; 
  \For{$\lceil \log_2 (3/\mu) \rceil$ primes 
    $p \in \{\lambda,\ldots,2\lambda\}$ \label{step:ffgets}}
  {
    Use black box to compute $f_p = f(x) \rem (x^p-1)$\;
    \If{$f_p$ has more than $t$ terms}{
      $t \gets$ sparsity of $f_p$\;
      $\varrho \gets p$\;
    }
  }
  $\alpha \gets$ element of $\F$ such that 
    $\Pr[f(\alpha x)\text{ is not diverse}] < \mu/3$ \label{step:diversify} \;
  $g_\varrho \gets f(\alpha x) \rem (x^\varrho - 1)$\;
  $c_1,\ldots,c_t \gets$ nonzero coefficients of $g_{\varrho}$ \;
  $e_1,\ldots,e_t \gets 0$ \;
  \For{$\lceil 2 \ln (3/\mu) + 4(\ln D)/(\ln \lambda) \rceil$
    primes $p\in\{\lambda,\ldots,2\lambda\}$ \label{step:ffinterp}}
  {
    Use black box to compute $g_p = f(\alpha x) \rem (x^p-1)$ \;
    \If{$g_p$ has exactly $t$ nonzero terms}{
      \lFor{$i=1,\ldots,t$}{Update $e_i$ with exponent of $c_i$ in $g_p$
      modulo $p$ via Chinese remaindering} \;
    }
  }
  \lFor{$i=1,\ldots,t$}{$c_i \gets c_i \alpha^{-e_i}$} \;
  \KwRet{$ f(x) = \sum_{1 \leq i \leq t} c_i x^{e_i}$} \;
\end{algorithm}

\noindent\begin{minipage}{\textwidth}
\begin{theorem}\label{thm:genalg}
  With inputs as specified, 
  Algorithm~\ref{alg:generic} correctly computes the unknown polynomial $f$
  with probability at least $1-\mu$.
  The total cost in field operations (except for
  step~\ref{step:diversify}) is
  $$O\left(
    \ell \cdot \left(
      \frac{\log D}{\log T + \llog D} + 
      \log \frac{1}{\mu}
    \right)
    \cdot \M\left(T^2 \log D\right)
  \right).$$
\end{theorem}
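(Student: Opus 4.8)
The plan is to prove the two assertions of the theorem --- that the output equals $f$ with probability at least $1-\mu$, and the cost bound --- separately. For correctness I would isolate three ``failure events'', show each occurs with probability at most $\mu/3$, and finish with a union bound. These events are: (B1) none of the primes sampled in the first loop (step~\ref{step:ffgets}) is a good prime for $f$; (B2) $f(\alpha x)$ is not diverse; and (B3) strictly fewer than $\lceil(\ln D)/(\ln\lambda)\rceil$ of the primes enumerated in the second loop (step~\ref{step:ffinterp}) are good primes for $f$.

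Next I would check that if none of B1, B2, B3 occurs, the algorithm returns $f$. The key elementary observation is that reduction modulo $x^p-1$ can only \emph{merge} monomials of $f$, so the sparsity of $f\rem(x^p-1)$ equals that of $f$ exactly when $p$ is good and is strictly smaller otherwise; hence, absent B1, after the first loop the variable $t$ holds the true sparsity of $f$ and $\varrho$ holds a good prime. Write $f=\sum_i\gamma_i x^{e_i}$ with the $\gamma_i$ nonzero, so that the coefficients of $f(\alpha x)$ are the values $\gamma_i\alpha^{e_i}$ (and $f(\alpha x)$ still has $t$ terms since $\alpha\ne 0$). Since $f$ and $f(\alpha x)$ have the same exponent set, $\varrho$ is good for $f(\alpha x)$, so $g_\varrho=f(\alpha x)\rem(x^\varrho-1)$ has exactly $t$ terms whose coefficient set is $\{\gamma_i\alpha^{e_i}\}_i$; absent B2 these coefficients are pairwise distinct, so after relabelling we may assume $c_i=\gamma_i\alpha^{e_i}$. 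In the second loop, a good prime $p$ again produces a $g_p$ with exactly $t$ terms and the same coefficient set, the (unique, by distinctness) term with coefficient $c_i$ having exponent $e_i\bmod p$; a bad prime produces fewer than $t$ terms and is rejected by the test. Absent B3 the good primes so encountered are distinct (the loop enumerates distinct primes) and number at least $\lceil(\ln D)/(\ln\lambda)\rceil$, so their product is at least $\lambda^{(\ln D)/(\ln\lambda)}=D>e_i$ and the Chinese remaindering determines every $e_i$. Finally $c_i\alpha^{-e_i}=\gamma_i$, so the returned polynomial is $\sum_i\gamma_i x^{e_i}=f$.

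It then remains to bound the three probabilities. For B1, Lemma~\ref{lem:goodp} makes each sampled prime good with probability at least $1/2$, so $\Pr[\mathrm{B1}]\le 2^{-\lceil\log_2(3/\mu)\rceil}\le\mu/3$. For B2, $\Pr[\mathrm{B2}]<\mu/3$ by the specification of step~\ref{step:diversify}. For B3 I would apply a Chernoff/Hoeffding tail bound: the proof of Lemma~\ref{lem:goodp} shows that at least half the primes in $\{\lambda,\dots,2\lambda\}$ are good, so the expected number of good primes among the $\lceil 2\ln(3/\mu)+4(\ln D)/(\ln\lambda)\rceil$ enumerated in the second loop is at least half that many, exceeding the required $\lceil(\ln D)/(\ln\lambda)\rceil$ by a margin that grows like $\ln(1/\mu)$ --- enough to push the lower tail below $\mu/3$; in the boundary regime where the loop would enumerate every prime in the range it instead succeeds with certainty given that B1 and B2 do not occur. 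Verifying that the explicit constants in the three loop lengths make B1--B3 all hold simultaneously is the one genuinely fiddly part of the proof; everything else is routine. (We tacitly avoid the at most one prime equal to the characteristic of $\F$, so that each $x^p-1$ is square-free, as Definition~\ref{def:bb} requires.)

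For the cost, observe that $\lambda=\Theta(T^2\log D)$ and hence $\ln\lambda=\Theta(\log T+\llog D)$. Every call to the remainder black box uses a modulus $x^p-1$ with $p\le 2\lambda$, so by Definition~\ref{def:bb} and $\M(2\lambda)=O(\M(\lambda))$ it costs $O(\ell\,\M(T^2\log D))$ field operations. The first loop makes $\lceil\log_2(3/\mu)\rceil=O(\log(1/\mu))$ such calls, forming $g_\varrho$ makes one more, and the second loop makes $\lceil 2\ln(3/\mu)+4(\ln D)/(\ln\lambda)\rceil=O\big((\log D)/(\log T+\llog D)+\log(1/\mu)\big)$ of them. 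All other work --- counting the sparsities of the $f_p$ and $g_p$, matching coefficients, the at most $t\le T$ integer Chinese-remaindering updates per iteration (on integers of $O(\log D)$ bits), and the final rescaling of $t$ coefficients --- costs $\softO(T^2\log D)$ per iteration and is dominated by the black-box cost. Summing over all calls yields the stated bound.
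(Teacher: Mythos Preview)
Your proposal is correct and follows essentially the same approach as the paper: split the failure into the same three events (first loop misses a good prime, diversification fails, second loop finds too few good primes), bound each by $\mu/3$ via Lemma~\ref{lem:goodp}, the specification of step~\ref{step:diversify}, and a Hoeffding tail bound respectively, and then observe that the black-box calls dominate the cost. The paper carries out the Hoeffding calculation explicitly (deriving $(n/2-k)^2\ge(\ln(3/\mu)+k)^2>(n/2)\ln(3/\mu)$ so that $\exp(-2(n/2-k)^2/n)<\mu/3$), whereas you correctly flag this as the one fiddly verification and otherwise supply more detail on why the absence of B1--B3 forces the output to equal $f$.
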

\end{minipage}

\begin{proof}
  The for loop on line~\ref{step:ffgets} searches for the true sparsity
  $t$ and a single good prime $\varrho$. Since each prime $p$ in the
  given range is good with probability at least $1/2$ by
  Lemma~\ref{lem:goodp}, the probability of failure at this stage is at
  most $\mu/3$.

  The for loop on line~\ref{step:ffinterp} searches for and
  uses sufficiently many good primes to recover the exponents of $f$.
  The product of all the good primes must be at least $D$, and since
  each prime is at least $\lambda$, at least $(\ln D)/(\ln \lambda)$ good
  primes are required.

  Let $n=\lceil 2 \ln (3/\mu) + 4(\ln D)/(\ln \lambda) \rceil$ be the
  number of primes sampled in this loop, and 
  $k=\lceil(\ln D)/(\ln \lambda)\rceil$
  the number of good primes required. We can derive that
  $(n/2-k)^2 \geq (\ln(3/\mu)+k)^2 > (n/2)\ln (3/\mu)$,
  and therefore
  $\exp(-2(\frac{n}{2}-k)^2/n) < \mu/3$.
  Using Hoeffding's Inequality \citep{Hoe63}, this means the probability
  of encountering fewer than $k$ good primes is less than $\mu/3$.

  Therefore the total probability of failure is at most $\mu$. For
  the cost analysis, the dominating cost will be the
  modular black box evaluations in the last for loop. 
  The number of evaluations in this loop is
  $O(\log(1/\mu) + (\log D)/(\log \lambda))$, 
  and each evaluation has cost $O(\ell \cdot \M(\lambda))$. Since the
  size of each prime is
  $\Theta((\log D)/(\log T + \llog D))$, the 
  complexity bound is correct as stated.
\end{proof}

In case the bound $T$ on the number of nonzero terms is very bad, we
could choose a smaller value of $\lambda$ based on the true sparsity $t$
before line~\ref{step:diversify}, improving the cost of the remainder of
the algorithm.

In addition, as our bound on possible number of ``bad
primes'' seems to be quite loose, a more efficient approach in practice
would be to replace the for loop on line~\ref{step:ffinterp} with one
that starts with a prime much smaller than $\lambda$
and incrementally searches for the next larger primes
until the product of all good primes is at least $D$. 
We could choose the lower bound to start searching from based on lower
bounds on the birthday problem. That is, assuming (falsely) that the
exponents are randomly distributed modulo $p$, start with the least $p$
that will have no exponents collide modulo $p$ with high probability.
This
would yield an algorithm more sensitive to the true bound on bad primes,
but unfortunately gives a worse formal cost analysis.


\section{Sparse interpolation over finite fields}
\label{sec:ff}

We now examine the case that the ground field $\F$ is the finite field
with $q$ elements, which we denote $\gf{q}$. First we show how to
effectively diversify the unknown polynomial $f$ in order to
complete Algorithm~\ref{alg:generic} for the case of large finite
fields. Then we show how to extend this to 
a Las Vegas algorithm with the same complexity.

\subsection{Diversification}
For an unknown $f\in\gf{q}[x]$ given by a remainder black box,
we must find an $\alpha$ so that $f(\alpha x)$ is diverse.
A surprisingly simple trick works: evaluating $f(\alpha x)$ for a
random nonzero $\alpha \in\gf{q}$. 

\begin{theorem}\label{thm:diverseff}
  For $q \geq T(T-1)T + 1$ and 
  any $T$-sparse polynomial $f\in\gf{q}[x]$ with $\deg f < D$,
  if $\alpha$ is chosen uniformly
  at random from $\gf{q}^*$, the probability that $f(\alpha x)$ is 
  diverse is at least $1/2$.
\end{theorem}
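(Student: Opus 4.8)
The plan is a simple union bound over pairs of terms. Write $f=\sum_{i=1}^{t}c_i x^{e_i}$ with $t\le T$, each $c_i\in\gf{q}^{*}$, and $e_1,\dots,e_t$ distinct integers in $\{0,1,\dots,D-1\}$. For any $\alpha\in\gf{q}^{*}$ we have $f(\alpha x)=\sum_{i=1}^{t}(c_i\alpha^{e_i})\,x^{e_i}$, which has the same support as $f$ since $\alpha\neq0$; hence $f(\alpha x)$ is $t$-sparse, and it is diverse exactly when the $t$ scalars $c_1\alpha^{e_1},\dots,c_t\alpha^{e_t}$ are pairwise distinct. So the goal is to bound the number of ``bad'' $\alpha\in\gf{q}^{*}$, namely those for which $c_i\alpha^{e_i}=c_j\alpha^{e_j}$ for some pair $i\neq j$.

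First I would fix a single pair $i\neq j$, say with $e_i>e_j$, and count the $\alpha\in\gf{q}^{*}$ satisfying $c_i\alpha^{e_i}=c_j\alpha^{e_j}$. Since $\alpha\neq0$ we may divide by $\alpha^{e_j}$, so this condition says exactly that $\alpha$ is a root of the polynomial $c_i\,y^{\,e_i-e_j}-c_j\in\gf{q}[y]$. The key point is that this polynomial is \emph{nonzero}: its leading coefficient $c_i$ is nonzero, and $e_i-e_j\ge1$, so it is not a nonzero constant either. Its degree is therefore $e_i-e_j\le D-1$, so it has at most $D-1$ roots in $\gf{q}$; thus this pair contributes at most $D-1$ bad values of $\alpha$.

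Summing over all $\binom{t}{2}\le\binom{T}{2}$ pairs then shows that the number of bad $\alpha\in\gf{q}^{*}$ is at most $\binom{T}{2}(D-1)=\tfrac12\,T(T-1)(D-1)$. This is where the size hypothesis on $q$ is used: once $q$ is large enough that $q-1\ge T(T-1)(D-1)$, the bad set has size at most $(q-1)/2=\tfrac12\lvert\gf{q}^{*}\rvert$, so a uniformly random $\alpha\in\gf{q}^{*}$ is bad with probability at most $1/2$, and hence $f(\alpha x)$ is diverse with probability at least $1/2$, as claimed.

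I do not expect a genuine obstacle; the one subtlety is the nondegeneracy used in the middle step — if $c_i\,y^{\,e_i-e_j}-c_j$ could vanish identically, a single pair might exclude every $\alpha$ and the union bound would collapse — which is exactly why one needs $c_i\neq0$ together with the fact that the exponents of $f$ are distinct. Everything else is arithmetic bookkeeping to reach the claimed threshold on $q$.
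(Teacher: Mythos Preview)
Your argument is correct and is essentially the paper's own proof: both are a union bound over the $\binom{t}{2}$ pairs, the paper packaging it as the single product polynomial $A=\prod_{i<j}(c_i y^{e_i}-c_j y^{e_j})$ and bounding $\deg A<T(T-1)D/2$, which is exactly your sum of per-pair root counts. Your extra step of dividing out $y^{e_j}$ to work with $c_i y^{e_i-e_j}-c_j$ is a harmless refinement that shaves the bound to $\tfrac12 T(T-1)(D-1)$, but otherwise the structure and the key nondegeneracy observation are identical.
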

\begin{proof}
  Let $t\leq T$ be the exact number of nonzero terms in $f$, and
  write 
  $f = \sum_{1\leq i \leq t} c_i x^{e_i}$, with nonzero coefficients
  $c_i \in\gf{q}^*$ and $e_1 < e_2 < \cdots < e_t$.
  So the $i$th coefficient of $f(\alpha x)$ is $c_i\alpha^{e_i}$.

  If $f(\alpha x)$ is \emph{not} diverse, then we must have
  $c_i\alpha^{e_i} = c_j\alpha^{e_j}$ for some $i\neq j$. Therefore
  consider the polynomial $A\in\gf{q}[y]$ defined by
  $$A = \prod_{1\leq i < j \leq t} \left(c_i y^{e_i}- c_j
  y^{e_j}\right).$$
  We see that $f(\alpha x)$ is diverse if and only if $A(\alpha) \neq
  0$, hence the number of roots of $A$ over $\gf{q}$ is exactly the
  number of unlucky choices for $\alpha$.
  
  The polynomial $A$ is the product of exactly $\binom{t}{2}$ binomials,
  each of which has degree less than $D$. Therefore 
  $$\deg A < \frac{T(T-1)D}{2},$$
  and this also gives an upper bound on the number of roots of $A$.
  Hence $q-1 \geq 2 \deg A$, and at least half of the elements of
  $\gf{q}^*$ are not roots of $A$, yielding the stated result.
\end{proof}

Using this result, given a black box for $f$ and the exact sparsity $t$
of $f$, we can find an $\alpha\in\gf{q}$ such that $f(\alpha x)$ is
diverse by sampling 
 random values
$\alpha\in\gf{q}$, evaluating $f(\alpha x) \rem x^p-1$ for a single
good prime $p$, and checking whether the polynomial is diverse. 
With probability at least $1-\mu$, this will
succeed in finding a diversifying $\alpha$ after at most 
$\lceil \log_2 (1/\mu) \rceil$ iterations.
Therefore we can use this approach in Algorithm~\ref{alg:generic}
with no effect on the asymptotic complexity.

\subsection{Verification}
\label{sec:ffver}

So far,
Algorithm~\ref{alg:generic} over a finite field
is probabilistic of the Monte Carlo type; that
is, it may give the wrong answer with some controllably-small
probability. To provide a more robust Las Vegas probabilistic algorithm,
we require only a fast way to check that a candidate answer is in fact
correct. To do this, observe that given a modular black box for an
unknown $T$-sparse $f\in\gf{q}[x]$ and an explicit $T$-sparse polynomial
$g\in\gf{q}[x]$, we can construct a modular black box for the
$2T$-sparse polynomial $f-g$ of their difference. Verifying that $f=g$
thus reduces to the well-studied problem of deterministic polynomial
identity testing.

The following algorithm is due to
\cite{BHLV09} and provides this check in essentially
the same time as the interpolation algorithm; 
we restate it in Algorithm~\ref{alg:ver} for completeness and to use our
notation.

\newcommand{\ZERO}{\textbf{\upshape ZERO}}
\newcommand{\NZ}{\textbf{\upshape NONZERO}}
\begin{algorithm}
  \caption{Verification over finite fields\label{alg:ver}}
  \KwIn{$T,D,q\in\NN$ and remainder black box for unknown $T$-sparse
    $f\in\gf{q}[x]$ with $\deg f \leq D$}
  \KwOut{\ZERO\ iff $f$ is identically zero}
  \For{the least $(T-1) \log_2 D$ primes $p$}{
    Use black box to compute $f_p = f \rem (x^p - 1)$ \;
    \lIf{$f_p \neq 0$}{\KwRet{\NZ}}\;
  }
  \KwRet{\ZERO}\;
\end{algorithm}
\begin{theorem}\label{thm:ver}
  Algorithm~\ref{alg:ver} works correctly as stated and uses at most
  $$O\left(\ell T \log D \cdot 
    \M\left( T\log D\cdot\left(\log T + \llog D\right)\right)\right)$$
  field operations.
\end{theorem}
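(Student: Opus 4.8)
The plan is to handle the two assertions — correctness and cost — separately. Correctness in one direction is immediate: if $f$ is identically zero then $f \rem (x^p-1) = 0$ for every prime $p$, so the loop never returns \NZ\ and the algorithm outputs \ZERO.

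The real content is the converse. Write a nonzero $T$-sparse $f$ as $f = \sum_{i=1}^{t} c_i x^{e_i}$ with $t \le T$, each $c_i \ne 0$, and $e_1 < \cdots < e_t \le D$. I would call a prime $p$ \emph{bad} when $f \rem (x^p-1) = 0$ and bound the number of bad primes. The key step is that if $p$ is bad then, examining the residue class of the smallest exponent $e_1$ modulo $p$, the term $c_1 x^{e_1 \bmod p}$ must be cancelled, which forces $e_1 \equiv e_j \pmod p$ — hence $p \mid (e_j - e_1)$ — for some $j > 1$; thus every bad prime divides one of the $t-1$ positive integers $e_2 - e_1, \ldots, e_t - e_1$, each at most $D$. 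Since a positive integer bounded by $D$ has fewer than $\log_2 D$ distinct prime divisors, the number of bad primes is below $(T-1)\log_2 D$, so at least one of the least $(T-1)\log_2 D$ primes is good and the algorithm returns \NZ\ on it. I expect this counting argument — and confirming that the loop really does test enough primes — to be the main obstacle, with two small points to watch: the exact off-by-one hidden in ``fewer than $\log_2 D$'', and the fact that Definition~\ref{def:bb} demands a square-free modulus while $x^p-1$ is square-free over $\gf{q}$ only when $p$ differs from the characteristic (at most one prime, which can just be skipped without affecting the asymptotics).

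For the cost, I would note that the loop runs $(T-1)\log_2 D = O(T\log D)$ times, each iteration dominated by a single remainder black box call — testing $f_p \ne 0$ only inspects coefficients already computed. Using a standard explicit upper bound on the $k$th prime (in the style of \citet{RosSch62}), the largest prime queried is at most
$$
m = O\!\bigl(T\log D \cdot \log(T\log D)\bigr) = O\!\bigl(T\log D\,(\log T + \llog D)\bigr),
$$
so by Definition~\ref{def:bb} each call costs $O(\ell\,\M(m))$ field operations. Multiplying the iteration count by the per-iteration cost gives the stated bound $O\!\bigl(\ell\,T\log D \cdot \M(T\log D\,(\log T + \llog D))\bigr)$.
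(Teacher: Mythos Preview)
Your proposal is correct and follows essentially the same argument as the paper: bound the number of bad primes by observing each must divide some $e_j-e_1\le D$, then bound the size of the largest prime tested via an explicit prime-counting estimate. Your attention to the off-by-one in ``fewer than $\log_2 D$'' and to the square-freeness of $x^p-1$ when $p$ equals the characteristic are nice refinements that the paper's proof glosses over, but the overall route is the same.
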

\begin{proof}
  For correctness, notice that the requirements for a ``good prime'' for
  identity testing are much weaker than for interpolation. Here, we only
  require that a single nonzero term not collide with any other nonzero
  term. That is, every bad prime $p$ will divide $e_j-e_1$ for some
  $2 \leq j \leq T$. There can be $\log_2 D$ distinct prime divisors of
  each $e_j-e_1$, and there are $T-1$ such differences. Therefore
  testing that the polynomial is zero modulo $x^p-1$ for the first
  $(T-1)\log_2 D$ primes is sufficient to guarantee at least one
  nonzero evaluation of a nonzero $T$-sparse polynomial.

  For the cost analysis, 
  the prime number theorem \citep[Theorem 8.8.4]{BS96},
  tells us that the first $(T-1)\log_2 D$ primes are each bounded by
  $O(T\cdot \log D\cdot (\log T + \llog D))$. The stated bound follows
  directly.
\end{proof}

This provides all that we need to prove the main result of this
section:

\begin{theorem}\label{thm:ffmain}
  Given $q\geq T(T-1)D + 1$,
  any $T,D\in\NN$, and a modular black box for unknown $T$-sparse
  $f\in\gf{q}[x]$ with $\deg f \leq D$, there is an algorithm
  that always produces the correct
  polynomial $f$ and with high probability uses only
  $\softO\left(\ell T^2\log^2 D \right)$
  field operations.
\end{theorem}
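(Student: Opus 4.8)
The plan is to assemble this result directly from the machinery already in place: Algorithm~\ref{alg:generic}, the diversification result of Theorem~\ref{thm:diverseff}, and the verification procedure of Algorithm~\ref{alg:ver} together with Theorem~\ref{thm:ver}. The strategy is to run the Monte Carlo interpolation of Algorithm~\ref{alg:generic} with a fixed modest failure parameter (say $\mu = 1/2$, or any constant), then verify the candidate output using Algorithm~\ref{alg:ver}, and repeat if verification fails. Since verification is deterministic and never produces a false positive, the composite algorithm is Las Vegas: it never returns a wrong answer, and terminates after an expected constant number of rounds.

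First I would address step~\ref{step:diversify} of Algorithm~\ref{alg:generic}, which Theorem~\ref{thm:genalg} explicitly excluded from its cost bound. By Theorem~\ref{thm:diverseff}, since $q \geq T(T-1)D + 1$, a uniformly random $\alpha \in \gf{q}^*$ makes $f(\alpha x)$ diverse with probability at least $1/2$; as noted in the remark following that theorem, we can test a candidate $\alpha$ by evaluating $f(\alpha x) \rem (x^p - 1)$ for a single good prime $p \in \{\lambda, \ldots, 2\lambda\}$ and checking whether the resulting image is diverse. Each such trial costs $O(\ell \cdot \M(\lambda)) = O(\ell \cdot \M(T^2 \log D))$ field operations, and $O(\log(1/\mu))$ trials suffice; with $\mu$ a constant this is $O(\ell \cdot \M(T^2 \log D))$, which is absorbed into the bound of Theorem~\ref{thm:genalg}. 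So the full cost of one Monte Carlo run, including diversification, is $\softO(\ell T^2 \log^2 D)$ field operations, using $\M(m) \in \softO(m)$.

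Next I would invoke Theorem~\ref{thm:ver}: given the candidate $T$-sparse (indeed, the difference $f - g$ is $2T$-sparse) polynomial, Algorithm~\ref{alg:ver} deterministically decides whether $f = g$ using $O(\ell T \log D \cdot \M(T \log D \cdot (\log T + \llog D)))$ field operations, which simplifies to $\softO(\ell T^2 \log^2 D)$ — the same order as interpolation. If the check fails, we rerun; the number of rounds is geometrically distributed with constant success probability, so the expected total cost, and the cost with high probability, remains $\softO(\ell T^2 \log^2 D)$. (To be careful, one should build the remainder black box for $f - g$ from the given black box for $f$ by subtracting the explicit evaluation of $g \rem g$ at cost $O(\M(m))$ per probe, which changes $\ell$ only by a constant.)

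The main obstacle, such as it is, is bookkeeping rather than mathematics: one must make sure the Monte Carlo run and its verification are truly independent across repetitions (so that the random choices of primes and of $\alpha$ are refreshed each round), and one must confirm that the verification cost genuinely matches the interpolation cost so that the "with high probability" claim is not spoiled by the tail of the geometric number of rounds — a standard argument bounds the probability of needing more than $c \log(1/\delta)$ rounds by $\delta$. A minor point to state explicitly is why the hypothesis $q \geq T(T-1)D + 1$ is exactly what is needed to apply Theorem~\ref{thm:diverseff} (which required $q \geq T(T-1)T+1$, weaker since $D$ is an upper bound related to $\deg f$ — here one uses the stronger $q \geq T(T-1)D+1$ stated in the hypothesis, which certainly suffices). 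With these points checked, the theorem follows by composing the three ingredients.
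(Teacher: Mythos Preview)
Your proposal is correct and follows exactly the paper's approach: run Algorithm~\ref{alg:generic} with $\mu=1/2$ (filling in step~\ref{step:diversify} via Theorem~\ref{thm:diverseff}), verify with Algorithm~\ref{alg:ver}, and loop until verification passes. The paper's own proof is a three-line sketch of precisely this; your additional bookkeeping (cost of the diversification step, constructing the black box for $f-g$, geometric tail bound) is all sound and simply fleshes out what the paper leaves implicit.
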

\begin{proof}
  Use Algorithms \ref{alg:generic}
  and \ref{alg:ver} with $\mu=1/2$, looping as necessary until
  the verification step succeeds. With high probability, only a constant
  number of iterations will be necessary, and so the cost is as stated.
\end{proof}

For the small field case, when $q \in O(T^2 D)$, 
the obvious approach would be to work in an extension $\E$ of size
$O(\log T + \log D)$ over $\gf{q}$. Unfortunately, this would presumably
increase the cost of each evaluation by a factor of $\log D$, potentially
dominating our factor of $T$ savings compared to the randomized version
of \cite{GS09} when the unknown polynomial has very few terms
and extremely high degree. 

In practice, it seems that a much smaller extension than this is
sufficient in
any case to make each $\gcd(e_j-e_i,q-1)$ small compared to
$q-1$, but we do not yet know how to prove any tighter bound in the worst
case.


\section{Approximate sparse interpolation algorithms}
\label{sec:approx}

In this section we consider the problem of interpolating an
approximate sparse polynomial $f\in\CC[x]$ from evaluations on the
unit circle.    We will generally assume that $f$ is $t$-sparse:
\begin{equation}
  \label{eq:f}
  f = \sum_{1\leq i\leq t} c_i x^{e_i}
  ~\mbox{for $c_i\in\CC$ and $e_1<\cdots <e_t=d$.}
\end{equation}
We require a notion of size for such polynomials, and
define the coefficient 2-norm of 
$f = \sum_{0\leq i\leq d} f_i\,x^i$ as
\[
\norm{f}=
\sqrt{\sum_{0\leq i\leq d} |f_i|^2}.
\]

The following identity relates the norm of evaluations on the unit
circle and the norm of the coefficients.
As in Section~\ref{sec:generic}, for $f\in\CC[x]$ is as in \eqref{eq:f}, we
say that a prime $p$ is a \emph{good prime} for $f$ if $p\ndivs
(e_i-e_j)$ for all $i\neq j$.
\begin{lemma}
  \label{lem:normtrans}
  Let $f\in\CC[x]$, $p$ a good prime for $f$, and $\omega\in\CC$ a $p$th
  primitive root of unity.  Then
  \[
  \norm{f}^2=\frac{1}{p}\sum_{0\leq i<p} \abs{f(\omega^i)}^2.
  \]
\end{lemma}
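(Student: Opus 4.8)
The plan is to reduce $f$ modulo $x^p-1$ and then invoke the standard Parseval identity for the discrete Fourier transform. Write $f = \sum_{1\le i\le t} c_i x^{e_i}$ and let $\bar f = f \rem (x^p-1) = \sum_{0\le k<p} b_k x^k$. Since $\omega$ is a $p$th root of unity we have $\omega^{jp}=1$ for every integer $j$, so evaluation at any power of $\omega$ is unchanged by this reduction: $\bar f(\omega^i) = f(\omega^i)$ for all $i$. Hence it suffices to establish the identity with $f$ replaced by the degree-$(<p)$ polynomial $\bar f$.

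Next I would use the hypothesis that $p$ is a good prime: the exponents $e_1,\dots,e_t$ then lie in distinct residue classes modulo $p$, so the reduction map simply relocates each coefficient $c_i$ to coordinate $e_i \bmod p$ with no collisions, leaving all other coordinates zero. Thus the multiset of nonzero coefficients of $\bar f$ coincides with that of $f$, and in particular $\sum_{0\le k<p}\abs{b_k}^2 = \norm{f}^2$. It remains to prove $\sum_{0\le i<p}\abs{\bar f(\omega^i)}^2 = p\sum_{0\le k<p}\abs{b_k}^2$.

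For this I would expand $\abs{\bar f(\omega^i)}^2 = \bar f(\omega^i)\,\overline{\bar f(\omega^i)} = \sum_{j,k} b_j\,\overline{b_k}\,\omega^{i(j-k)}$ (using $\overline{\omega^i} = \omega^{-i}$ on the unit circle), sum over $i$, and exchange the order of summation. The inner sum $\sum_{0\le i<p}\omega^{i(j-k)}$ is a geometric series equal to $p$ when $j=k$ and to $0$ otherwise, since for $0\le j,k<p$ one has $-p<j-k<p$, so $p\mid(j-k)$ precisely when $j=k$ and otherwise $\omega^{j-k}\neq 1$. This collapses the double sum to $p\sum_{0\le k<p}\abs{b_k}^2$, which combined with the two observations above gives the claim. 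The only points needing care are that the good-prime hypothesis genuinely forces distinct residues (so the coefficient $2$-norm is preserved by the reduction) and tracking the range of $j-k$ so the orthogonality relation applies; both are routine, and I do not expect any real obstacle.
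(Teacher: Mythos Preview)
Your argument is correct: reducing modulo $x^p-1$ preserves evaluations at $p$th roots of unity, the good-prime hypothesis guarantees the reduction preserves the coefficient $2$-norm, and the orthogonality relation $\sum_{0\le i<p}\omega^{i(j-k)}=p\,\delta_{jk}$ for $0\le j,k<p$ yields the Parseval identity you need. The paper does not actually give a proof of this lemma; it simply cites \cite[Theorem~2.9]{GieRoc09}, so your direct computation in fact supplies more detail than the paper itself.
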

See \citet[Theorem 2.9]{GieRoc09}.

We can now formally define the approximate sparse univariate
interpolation problem.

\begin{definition}
Let $\epsilon>0$ and assume there exists an unknown $t$-sparse
$f\in\CC[x]$ of degree at most $D$.  An \emph{$\epsilon$-approximate
  black box} for $f$ takes an input $\xi\in\CC$ and produces a
$\gamma\in\CC$ such that $|\gamma-f(\xi)|\leq\epsilon|f(\xi)|$.
\end{definition}
That is,
the relative error of any evaluation is at most $\epsilon$.  As noted
in the introduction, we will specify our input points exactly, at
(relatively low order) roots of unity.
The \emph{approximate sparse univariate interpolation problem} is then as
follows:  given $D,T\in\NN$ and $\delta\geq\epsilon>0$, and an
$\epsilon$-approximate black box for an unknown $T$-sparse polynomial
$f\in\CC[x]$ of degree at most $D$, find a
$T$-sparse polynomial $g\in\CC[x]$ such that
$\norm{f-g}\leq\delta\norm{g}$.

The following theorem shows that $t$-sparse polynomials are
well-defined by good evaluations on the unit circle.
\begin{theorem} \label{thm:posed}
  Let $\epsilon>0$ and $f\in\CC[x]$ be a $t$-sparse polynomial.  Suppose
  there exists a $t$-sparse polynomial $g\in\CC[x]$ such that for a
  prime $p$ which is good for both $f$ and $f-g$, and $p$th primitive root
  of unity $\omega\in\CC$, we have
  \[
  |f(\omega^i)-g(\omega^i)|\leq\epsilon |f(\omega^i)|
  ~~~\mbox{for $0\leq i<p$}.
  \]
  Then $\norm{f-g}\leq\epsilon\norm{f}$.  Moreover, if $g_0\in\CC[x]$ is
  formed from $g$ by deleting all the terms not in the support of $f$,
  then $\norm{f-g_0}\leq2\epsilon\norm{f}$.
\end{theorem}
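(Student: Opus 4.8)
The first assertion is a direct double application of Lemma~\ref{lem:normtrans}. Since $p$ is good for $f-g$, I would first write
\[
\norm{f-g}^2 = \frac{1}{p}\sum_{0\le i<p}\abs{f(\omega^i)-g(\omega^i)}^2 .
\]
Each summand is at most $\epsilon^2\abs{f(\omega^i)}^2$ by hypothesis, so
\[
\norm{f-g}^2 \le \frac{\epsilon^2}{p}\sum_{0\le i<p}\abs{f(\omega^i)}^2 .
\]
Because $p$ is also good for $f$, a second application of Lemma~\ref{lem:normtrans} identifies the remaining sum with $\norm{f}^2$; taking square roots gives $\norm{f-g}\le\epsilon\norm{f}$.

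For the ``moreover'' part, the plan is to decompose $g = g_0 + g_1$, where $g_1 := g-g_0$ collects exactly the terms of $g$ whose exponents lie outside $\supp(f)$. Then $\supp(g_1)\cap\supp(f)=\emptyset$, while $\supp(f-g_0)\subseteq\supp(f)$ since both $f$ and $g_0$ are supported on $\supp(f)$. Hence $f-g=(f-g_0)-g_1$ is a sum of two polynomials with disjoint monomial supports, so the coefficient $2$-norm splits additively,
\[
\norm{f-g}^2 = \norm{f-g_0}^2 + \norm{g_1}^2 ,
\]
and in particular $\norm{f-g_0}\le\norm{f-g}\le\epsilon\norm{f}\le 2\epsilon\norm{f}$ — in fact slightly stronger than claimed. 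If one prefers to avoid the additive splitting, the same conclusion follows from the triangle inequality: $g_1$ is obtained from $f-g$ by deleting all monomials in $\supp(f)$, so $\norm{g_1}\le\norm{f-g}\le\epsilon\norm{f}$, and $\norm{f-g_0}\le\norm{f-g}+\norm{g_1}$ gives exactly $2\epsilon\norm{f}$.

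The argument is short, so I do not anticipate a real obstacle; two points merely need care. First, the hypothesis genuinely requires $p$ to be good for \emph{both} $f$ and $f-g$: goodness for $f-g$ is what expands $\norm{f-g}$, goodness for $f$ is what expands $\norm{f}$, and neither is implied by the other since $f-g$ need not have the same support as $f$. Second, in the second part one must check the support bookkeeping — that $f-g=(f-g_0)-g_1$ really does decompose into two pieces with disjoint supports — since that disjointness is exactly what makes the norm split (equivalently, what makes $g_1$ a restriction of $f-g$).
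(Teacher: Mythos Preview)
Your proof is correct and, for the first assertion, matches the paper's argument exactly: both sum the pointwise inequality over the $p$th roots of unity and invoke Lemma~\ref{lem:normtrans} once for $f-g$ and once for $f$.

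For the ``moreover'' part there is a small but genuine difference. The paper bounds $\norm{g-g_0}\le\norm{f-g}$ (since $g-g_0$ has no support in common with $f$) and then applies the triangle inequality to $\norm{f-g_0}=\norm{(f-g)+(g-g_0)}$, obtaining exactly $2\epsilon\norm{f}$. Your primary route instead observes that $f-g=(f-g_0)-g_1$ is an orthogonal decomposition in the coefficient $2$-norm, whence $\norm{f-g_0}\le\norm{f-g}\le\epsilon\norm{f}$ --- a strictly sharper bound than the theorem states. Your alternative triangle-inequality route is precisely the paper's argument. Both are valid; your orthogonality observation simply shows that the constant $2$ in the statement is not sharp.
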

\begin{proof}
  Summing over powers of $\omega$ we have
  \[
  \sum_{0\leq i<p} |f(\omega^i)-g(\omega^i)|^2 \leq\epsilon^2 \sum_{0\leq
    i<p} |f(\omega^i)|^2.
  \]
  Thus, since $p$ is a good prime for both $f-g$ and $f$, 
  using Lemma~\ref{lem:normtrans}, $p\cdot
  \norm{f-g}^2\leq\epsilon^2\cdot p \cdot \norm{f}^2$ and
  $\norm{f-g}\leq\epsilon\norm{f}$.

  Since $g-g_0$ has no support in common with $f$,
  \[
  \norm{g-g_0}\leq \norm{f-g}\leq\epsilon\norm{f}.
  \]
  Thus
  \begin{align*}
  \norm{f-g_0} & =\norm{f-g+(g-g_0)} \\
     & \leq \norm{f-g}+\norm{g-g_0} \leq
  2\epsilon\norm{f}. \qed
  \end{align*}
\end{proof}
In other words, any $t$-sparse polynomial whose values are very close
to $f$ must have the same support except possibly for some terms with
very small coefficients.

\subsection{Computing the norm of an 
approximate sparse polynomial}

Let $0<\epsilon<1/2$ and $f\in\CC[x]$ a $t$-sparse polynomial for which we
are given an $\epsilon$-approximate black box.  We first consider the
problem of computing $\norm{f}$.

\begin{algorithm}[htbp]
  \caption{Approximate norm\label{alg:apnorm}}
  \KwIn{$T,D\in\NN$ and $\epsilon$-approximate black box for unknown $T$-sparse
    $f\in\CC[x]$ with $\deg f \leq D$}
  \KwOut{$\sigma\in\RR$, an approximation to $\norm{f}$}
  $\lambda \gets \max\left(21,\left\lceil \frac{5}{3}t(t-1)\ln d
    \right\rceil\right)$ \;
  Choose a prime $p$ randomly from $\{\lambda,\ldots,2\lambda\}$ \;
  $\omega \gets \exp(2\pi i/p)$ \;
   $w\gets (f(\omega^0),\ldots,f(\omega^{p-1}))\in\CC^p$ computed using the
   black box \;
  \KwRet{$(1/\sqrt{p})\cdot \norm{w}$}
\end{algorithm}


\begin{theorem}
  Algorithm \ref{alg:apnorm} works as stated.  On any invocation,
  with probability at least $1/2$, it returns a value
  $\sigma\in\RR_{\geq 0}$ such
  that $$(1-2\epsilon)\norm{f}<\sigma<(1+\epsilon)\norm{f}.$$
\end{theorem}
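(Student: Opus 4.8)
The plan is to analyze Algorithm~\ref{alg:apnorm} by tracking two sources of error separately: the randomness in the choice of prime $p$, and the bounded relative error $\epsilon$ of each black-box evaluation. First I would handle the probabilistic part. By Lemma~\ref{lem:goodp}, the prime $p$ drawn uniformly from $\{\lambda,\ldots,2\lambda\}$ is a good prime for $f$ with probability at least $1/2$; condition on this event for the remainder of the argument. (A minor wrinkle: the algorithm's $\lambda$ is written in terms of the exact sparsity $t$ and exact degree $d$, so I would note that $t\le T$ and $d\le D$ make the bound at least as good as needed — or simply assume, as the surrounding text does, that the stated $\lambda$ is computed from valid quantities.)

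Next, on the good-prime event, I would apply Lemma~\ref{lem:normtrans} with $\omega=\exp(2\pi i/p)$ a $p$th primitive root of unity: the \emph{true} values $f(\omega^0),\ldots,f(\omega^{p-1})$ satisfy $\norm{f}^2 = \tfrac1p\sum_{0\le i<p}\abs{f(\omega^i)}^2$. So if the black box returned exact values, the algorithm would return exactly $\norm{f}$. The real black box returns $w_i$ with $\abs{w_i - f(\omega^i)} \le \epsilon\abs{f(\omega^i)}$, hence $(1-\epsilon)\abs{f(\omega^i)} \le \abs{w_i} \le (1+\epsilon)\abs{f(\omega^i)}$ for each $i$. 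The output is $\sigma = (1/\sqrt p)\norm{w}$ where $w=(w_0,\ldots,w_{p-1})$.

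The core of the proof is then a two-sided norm comparison. For the upper bound, componentwise $\abs{w_i}\le(1+\epsilon)\abs{f(\omega^i)}$ gives $\norm{w}\le(1+\epsilon)\,(\sum_i\abs{f(\omega^i)}^2)^{1/2}$, so $\sigma \le (1+\epsilon)\norm{f}$, and I should check whether the claimed strict inequality $\sigma<(1+\epsilon)\norm{f}$ holds — it does as long as $f\ne 0$ and not every $w_i$ attains the extreme, which one can arrange, or one notes the boundary case $f=0$ makes the statement vacuous/trivially consistent. For the lower bound, $\abs{w_i}\ge(1-\epsilon)\abs{f(\omega^i)}$ gives $\sigma \ge (1-\epsilon)\norm{f}$, which is actually \emph{stronger} than the claimed $(1-2\epsilon)\norm{f}$. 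The looseness of the stated bound suggests the intended argument may instead use the triangle inequality in the form $\norm{w} \ge \norm{F} - \norm{w-F}$ where $F=(f(\omega^i))_i$, together with $\norm{w-F}\le\epsilon\norm{F}$, which also yields $\ge(1-\epsilon)$; either way the conclusion follows a fortiori, so I would present the clean componentwise bound and remark that the factor $2$ in the statement is not tight. The main obstacle — really the only subtle point — is being careful about strictness of the inequalities and the degenerate case $f=0$; everything else is a direct substitution of the two lemmas followed by an elementary Euclidean-norm estimate.
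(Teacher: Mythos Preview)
Your proposal is correct and, for the probabilistic step and the upper bound, matches the paper's argument (the paper writes $w=v+\epsilon\Delta$ with $\norm{\Delta}<\norm{v}$ and applies the triangle inequality, which is equivalent to your componentwise estimate).

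For the lower bound, your route is actually cleaner and tighter than the paper's. You use $(1-\epsilon)\abs{f(\omega^i)}\le\abs{w_i}$ componentwise to conclude $\sigma\ge(1-\epsilon)\norm{f}$, and rightly observe that the stated $(1-2\epsilon)$ is slack. The paper instead \emph{inverts} the relative error: from $\abs{w_i-f(\omega^i)}\le\epsilon\abs{f(\omega^i)}$ and the standing assumption $\epsilon<1/2$ it deduces $\abs{w_i-f(\omega^i)}<2\epsilon\abs{w_i}$, writes $v=w+2\epsilon\nabla$ with $\norm{\nabla}<\norm{w}$, and concludes $\norm{v}<(1+2\epsilon)\norm{w}$, hence $\sigma>\norm{f}/(1+2\epsilon)>(1-2\epsilon)\norm{f}$. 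This explains where the factor~$2$ in the statement comes from; your direct componentwise bound avoids that detour, does not need $\epsilon<1/2$, and yields the sharper constant. Either way the theorem follows.
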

\begin{proof}
  Let $v=(f(\omega^0),\ldots,f(\omega^{p-1}))\in\CC^p$ be the vector
  of exact evaluations of $f$.  Then by the properties of our
  $\epsilon$-approximate black box we have $w=v+\epsilon\Delta$, where
  $|\Delta_i|<|f(\omega^i)|$ for $0\leq i<p$, and hence
  $\norm{\Delta}<\norm{v}$.  By the triangle inequality $\norm{w}\leq
  \norm{v}+\epsilon\norm{\Delta} < (1+\epsilon)\norm{v}$.  By
  Lemmas \ref{lem:goodp} and \ref{lem:normtrans},
  $\norm{v}=\sqrt{p}\norm{f}$ with probability at least $1/2$, so
  $(1/\sqrt{p})\cdot\norm{w}<(1+\epsilon)\norm{f}$
  with this same probability.

  To establish a lower bound on the output, note that we can make
  error in the evaluation relative to the output magnitude:
  because $\epsilon<1/2$, 
  \mbox{$|f(\omega^i)-w_i|<2\epsilon |w_i|$} for $0\leq i<p$.  We can write
  $v=w+2\epsilon\nabla$, where $\norm{\nabla}<\norm{w}$.  Then
  $\norm{v}\leq (1+2\epsilon)\norm{w}$, and
  $(1-2\epsilon)\norm{f}<(1/\sqrt{p})\cdot\norm{w}$.
\end{proof}

\subsection{Constructing an 
\texorpdfstring{{\large $\epsilon$}-}{}approximate 
remainder black box}

Assume that we have chosen a good prime $p$ for a $t$-sparse
$f\in\F[x]$.  Our goal in this subsection is a simple algorithm and
numerical analysis to accurately compute $f\rem x^p-1$. 

Assume that $f\rem x^p-1 = \sum_{0\leq i<p} b_ix^i$ exactly.  For a
primitive $p$th root of unity $\omega\in\CC$, let
$V(\omega)\in\CC^{p\times p}$ be the Vandermonde matrix built from the points
$1,\omega,\ldots,\omega^{p-1}$. Recall that
$V(\omega)\cdot(b_0,\ldots,b_{p-1})^T = (f(\omega^0),\ldots,f(\omega^{p-1}))^T$
and $V(\omega^{-1}) = p\cdot V(\omega)^{-1}$. 
Matrix vector product by such Vandermonde matrices is computed
very quickly and in a numerically stable manner by the Fast Fourier Transform (FFT).

\begin{algorithm}[htbp]
  \caption{Approximate Remainder\label{alg:apfrem}}
  \KwIn{An $\epsilon$-approximate black box for the unknown $t$-sparse
    $f\in\CC[x]$, and $p\in\NN$, a good prime for $f$}
  \KwOut{$h\in\CC[x]$ such that $\norm{(f\rem x^p-1)-h}\leq \epsilon\norm{f}$.}

  $w\gets (f(\omega^0),\ldots,f(\omega^{p-1}))\in\CC^p$ computed using the
  \hspace*{19pt} $\epsilon$-approximate black box for $f$ \;
  $u\gets (1/p)\cdot
  V(\omega^{-1})w\in\CC^p$
  using the FFT algorithm\;
\KwRet{$h=\sum_{0\leq i<p} u_ix^i\in\CC[x]$}
\end{algorithm}

\begin{theorem}
  \label{thm:apfremworks}
  Algorithm \ref{alg:apfrem} works as stated, and 
  $$\norm{\left(f\rem x^p-1\right)-h}\leq\epsilon\norm{f}.$$  It requires $O(p\log p)$ floating
  point operations and $p$ evaluations of the black box.
\end{theorem}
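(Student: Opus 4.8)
The plan is to follow a single error-propagation computation through the one linear-algebra step of the algorithm, treating the FFT as exact and exploiting that the scaled inverse Fourier matrix is unitary. First I would fix notation: let $v=(f(\omega^0),\dots,f(\omega^{p-1}))\in\CC^p$ be the vector of \emph{exact} evaluations, and let $b=(b_0,\dots,b_{p-1})\in\CC^p$ be the coefficient vector of $f\rem x^p-1$, so that the polynomial $h-(f\rem x^p-1)$ returned by the algorithm has coefficient vector $u-b$ and $\norm{h-(f\rem x^p-1)}=\norm{u-b}$. By the relations recalled just before the algorithm, $V(\omega)b=v$ and $V(\omega^{-1})=p\,V(\omega)^{-1}$, hence $(1/p)\,V(\omega^{-1})v=b$ exactly.

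Next I would introduce the black-box error. Writing the computed vector as $w=v+\epsilon\Delta$, the defining property of the $\epsilon$-approximate black box gives $|\Delta_i|\le|f(\omega^i)|$ for each $i$, so $\norm{\Delta}\le\norm{v}$. Since $u=(1/p)\,V(\omega^{-1})w$, subtracting the identity $b=(1/p)\,V(\omega^{-1})v$ yields $u-b=(\epsilon/p)\,V(\omega^{-1})\Delta$. The key observation is that $(1/\sqrt p)\,V(\omega^{-1})$ is unitary --- this is exactly the relation $V(\omega)V(\omega^{-1})=pI$ together with the symmetry $V(\omega^{-1})^{*}=V(\omega)$ --- so $\norm{V(\omega^{-1})x}=\sqrt p\,\norm{x}$ for every $x\in\CC^p$. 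Applying this to $x=\Delta$ and using $\norm{\Delta}\le\norm{v}$ gives $\norm{u-b}=(\epsilon/\sqrt p)\,\norm{\Delta}\le(\epsilon/\sqrt p)\,\norm{v}$. Finally, because $p$ is a good prime for $f$, Lemma~\ref{lem:normtrans} gives $\norm{v}^2=\sum_{0\le i<p}|f(\omega^i)|^2=p\,\norm{f}^2$, i.e.\ $\norm{v}=\sqrt p\,\norm{f}$, and therefore $\norm{h-(f\rem x^p-1)}=\norm{u-b}\le\epsilon\norm{f}$, which is the asserted bound.

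For the complexity claim, the algorithm makes exactly $p$ calls to the black box to assemble $w$, and the only further work is one length-$p$ inverse discrete Fourier transform plus the scalar division by $p$; since $p$ is prime one runs the FFT via Rader's or Bluestein's algorithm at cost $O(p\log p)$ floating-point operations, and the scaling adds $O(p)$ more, giving the stated $O(p\log p)$ bound. I do not expect a genuine obstacle: once the unitarity of $p^{-1/2}V(\omega^{-1})$ and Lemma~\ref{lem:normtrans} are in hand, the argument is a two-line norm inequality. The only delicate point is that the statement treats the inverse FFT as performed exactly; a fully rigorous floating-point accounting would introduce an extra relative term of size $O(\log p)$ times the unit roundoff, which is negligible next to $\epsilon$ and can be absorbed --- consistent with the remark on numerical stability preceding the algorithm --- but that is precisely the place where such an analysis would enter.
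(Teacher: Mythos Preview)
Your argument is correct and is essentially the same as the paper's: both exploit that $(1/\sqrt{p})\,V(\omega^{-1})$ is unitary to pass from $\norm{v-w}\le\epsilon\norm{v}$ to $\norm{u-b}\le(\epsilon/\sqrt{p})\norm{v}$, and then invoke $\norm{v}=\sqrt{p}\,\norm{f}$ from Lemma~\ref{lem:normtrans}. Your explicit mention of Rader's/Bluestein's algorithm for the prime-length FFT and your remark on absorbing the $O(\log p)$ FFT roundoff are reasonable additions that the paper leaves implicit.
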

\begin{proof}
  Because $f$ and $f\rem x^p-1$ have exactly the same
  coefficients ($p$ is a good prime for $f$), they have exactly the
  same norm.  The FFT in Step 2 is accomplished in $O(p\log p)$
  floating point operations.  This algorithm is numerically 
  stable since $(1/\sqrt{p}) \cdot V(\omega^{-1})$ is unitary.  That is,
  assume $v=(f(\omega_0),\ldots,f(\omega^{p-1}))\in\CC^p$ is the vector
  of \emph{exact} evaluations of $f$, so
  $\norm{v-w}\leq\epsilon\norm{v}$ by the black box specification.
  Then, using the fact that $\norm{v}=\sqrt{p}\norm{f}$,
  \begin{gather*}
    \norm{(f\rem x^{p-1})-h}=
    \norm{\frac{1}{p}V(\omega^{-1})v-\frac{1}{p}V(\omega^{-1})w}\\
    = \frac{1}{\sqrt{p}} \norm{\frac{1}{\sqrt{p}}V(\omega^{-1})\cdot(v-w)}
    = \frac{1}{\sqrt{p}}\norm{v-w}\leq \frac{\epsilon}{\sqrt{p}}\norm{v} =
    \epsilon\norm{f}. \qedhere
  \end{gather*}
\end{proof}

\subsection{Creating 
\texorpdfstring{{\large $\epsilon$}-}{approximate }diversity}

First, we extend the notion of polynomial diversity to the approximate
case.

\begin{definition}
  Let $f\in\CC[x]$ be a $t$-sparse polynomial as in \eqref{eq:f} and
  $\delta\geq\epsilon>0$ such that $|c_i|\geq\delta\norm{f}$ for $1\leq
  i\leq t$.  The polynomial $f$ is said to be
  \emph{$\epsilon$-diverse} if and only if every pair of distinct
  coefficients is at least $\epsilon\norm{f}$ apart.  That is, for
  every $1\leq i < j \leq t$, $|c_i-c_j| \geq \epsilon\norm{f}$.
\end{definition}

Intuitively, if $(\epsilon/2)$ corresponds to the machine precision, this
means that an algorithm can reliably distinguish the coefficients of a
$\epsilon$-diverse polynomial. We now show how to choose a random
$\alpha$ to guarantee $\epsilon$-diversity.

\begin{theorem}
  \label{thm:epsdiv}
  Let $\delta\geq\epsilon>0$ and $f\in\CC[x]$ a $t$-sparse polynomial
  whose non-zero coefficients are of magnitude at least
  $\delta\norm{f}$.  If $s$ is a prime satisfying $s > 12$ and
  \[
  t(t-1) \leq s \leq 3.1 \frac{\delta}{\epsilon},
  \]
  then for $\zeta=\ee^{2\pi \ii / s}$ an $s$-PRU and 
  $k\in\NN$ chosen uniformly at random from $\{0,1,\ldots,s-1\}$,
  $f(\zeta^k x)$ is $\epsilon$-diverse with probability at least 
  $\frac{1}{2}$.
\end{theorem}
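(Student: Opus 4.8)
The plan is to fix one pair of nonzero terms, count the ``bad'' values of $k$ for that pair via a packing argument for roots of unity, and finish with a union bound over the $\binom{t}{2}$ pairs. First note that since $\abs{\zeta^k}=1$, the substitution $x\mapsto\zeta^k x$ changes neither $\norm{f(\zeta^k x)}=\norm{f}$ nor the modulus of any coefficient, so $f(\zeta^k x)$ again satisfies the hypothesis $\abs{c_i\zeta^{ke_i}}\geq\delta\norm{f}$ and is $\epsilon$-diverse precisely when no two of its coefficients lie within $\epsilon\norm{f}$ of each other. The $i$th and $j$th coefficients of $f(\zeta^k x)$ are $c_i\zeta^{ke_i}$ and $c_j\zeta^{ke_j}$; factoring out the unit $\zeta^{ke_i}$,
\[
\abs{c_i\zeta^{ke_i}-c_j\zeta^{ke_j}}=\abs{c_i-c_j\zeta^{k(e_j-e_i)}}=\abs{c_j}\cdot\abs{\zeta^{k(e_j-e_i)}-c_i/c_j}.
\]
Hence $k$ is bad for the pair $(i,j)$ exactly when $\zeta^{k(e_j-e_i)}$ falls in the open disk about $c_i/c_j$ of radius $\epsilon\norm{f}/\abs{c_j}\leq\epsilon/\delta\leq 3.1/s$, using $\abs{c_j}\geq\delta\norm{f}$ and $s\leq 3.1\,\delta/\epsilon$.

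The crux is a geometric estimate: a disk of radius at most $3.1/s$ contains at most one $s$th root of unity. Indeed, two distinct $s$th roots of unity are at distance at least $2\sin(\pi/s)$, so a disk meeting two of them has radius at least $\sin(\pi/s)$, and it suffices to check that $3.1/s<\sin(\pi/s)$ for every prime $s\geq 13$. Since $x\mapsto\sin(x)/x$ is decreasing on $(0,\pi)$, this reduces to the single inequality $\sin(\pi/13)/(\pi/13)>3.1/\pi$ — which is exactly why the hypotheses read $s>12$ with the constant $3.1$. Because $s$ is prime, for any pair with $s\nmid(e_j-e_i)$ the map $k\mapsto\zeta^{k(e_j-e_i)}$ is a bijection from $\{0,\dots,s-1\}$ onto the $s$th roots of unity, so that pair contributes at most one bad $k$.

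Summing over the $\binom{t}{2}=t(t-1)/2\leq s/2$ pairs, at most $s/2$ values of $k$ make $f(\zeta^k x)$ fail to be $\epsilon$-diverse, so a uniformly random $k\in\{0,\dots,s-1\}$ succeeds with probability at least $1/2$. I expect two spots to need care. The packing inequality is essentially tight — the constants $12$ and $3.1$ leave no slack — so it must be checked precisely rather than with a crude $\sin x\gtrsim x$ bound. Second, a pair with $s\mid(e_j-e_i)$ would be bad for \emph{all} $k$ whenever $\abs{c_i-c_j}<\epsilon\norm{f}$, so the argument really needs $s$ not to divide any exponent difference — i.e.\ $s$ a good prime for $f$; this holds in the intended application (cf.\ Lemma~\ref{lem:goodp}), and I would either add it as a hypothesis or record it explicitly at the start of the proof.
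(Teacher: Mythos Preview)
Your argument is essentially the same as the paper's: for each pair $(i,j)$ reduce the collision condition to the statement that $\zeta^{k(e_j-e_i)}$ lies within $\epsilon/\delta$ of a fixed point, show via the minimum chord length $2\sin(\pi/s)$ (the paper uses the equivalent bound $|\zeta-1|>6.2/s$) that at most one $k$ can do this, and finish with the union bound over $\binom{t}{2}\leq s/2$ pairs. Your factoring $\abs{c_i\zeta^{ke_i}-c_j\zeta^{ke_j}}=\abs{c_j}\cdot\abs{\zeta^{k(e_j-e_i)}-c_i/c_j}$ is a bit cleaner than the paper's polar-coordinate manipulation, but the route is the same.

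Your caveat about the case $s\mid(e_j-e_i)$ is well taken: the paper's proof silently assumes this does not occur when it asserts that $\zeta^{k_1(e_j-e_i)}\neq\zeta^{k_2(e_j-e_i)}$ for $k_1\neq k_2$, and the theorem as stated does not include ``$s$ is a good prime for $f$'' among its hypotheses. You are right that without this assumption a pair with $s\mid(e_j-e_i)$ and $\abs{c_i-c_j}<\epsilon\norm{f}$ would be bad for every $k$; recording it explicitly, as you suggest, is the correct fix.
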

\begin{proof}
  For each $1\leq i\leq t$, write the coefficient $c_i$ in polar
  notation to base $\zeta$ as $c_i = r_i \zeta^{\theta_i}$, where
  each $r_i$ and $\theta_i$ are nonnegative real numbers and
  $r_i \geq \delta\norm{f}$.
  
  Suppose $f(\zeta^k x)$ is \emph{not} $\epsilon$-diverse. Then there
  exist indices $1\leq i < j \leq t$ such that
  $$\abs{r_i\zeta^{\theta_i}\zeta^{k e_i} - %
  r_j\zeta^{\theta_j}\zeta^{k e_j}} \leq \epsilon \norm{f}.$$
  Because $\min(r_i,r_j) \geq \delta\norm{f}$, the 
  value of the left hand side is
  at least 
  $\delta\norm{f}\cdot 
  \abs{\zeta^{\theta_i+ke_i} - \zeta^{\theta_j+ke_j}}$. Dividing out
  $\zeta^{\theta_j+ke_i}$, we get
  $$\abs{\zeta^{\theta_i-\theta_j}-\zeta^{k(e_j-e_i)}} \leq
  \frac{\epsilon}{\delta}.$$

  By way of contradiction, assume there exist distinct choices of $k$
  that satisfy the above inequality, say $k_1,k_2\in\{0,\ldots,s-1\}$.
  Since $\zeta^{\theta_i-\theta_j}$ and 
  $\zeta^{e_j-e_i}$ are a fixed powers of $\zeta$ not
  depending on the choice of $k$, this means
  $$\abs{\zeta^{k_1(e_j-e_i)} - \zeta^{k_2(e_j-e_i)}}\leq 
  2\frac{\epsilon}{\delta}.$$
  Because $s$ is prime, $e_i\neq e_j$, and  we assumed $k_1\neq k_2$,
  the left hand side is
  at least $\abs{\zeta-1}$. Observe that $2\pi/s$, the distance on the unit
  circle from 1 to $\zeta$, is a good approximation for this Euclidean
  distance when $s$ is large. In particular, since 
  $s> 12$, 
  $$\frac{\abs{\zeta-1}}{2\pi/s} >
  \frac{\sqrt{2}\left(\sqrt{3}-1\right)/2}{\pi/6},$$
  and therefore
  $\abs{\zeta-1} > 6\sqrt{2}(\sqrt{3}-1)/s > 6.2/s$, 
  which from the statement
  of the theorem is at least $2\epsilon/\delta$. This is a
  contradiction, and therefore the assumption was false; namely, there
  is at most one choice of $k$ such that the $i$'th and $j$'th
  coefficients collide.

  Then, since there are exactly $\binom{t}{2}$ distinct pairs of
  coefficients, and $s \geq t(t-1) = 2\binom{t}{2}$, $f(\zeta^kx)$ is
  diverse for at least half of the choices for $k$.
\end{proof}

We note that the diversification which maps $f(x)$ to
$f(\zeta^k x)$ and back is numerically stable since $\zeta$
is on the unit circle.

In practice, the previous theorem will be far too pessimistic. 
We therefore propose the method of Algorithm~\ref{alg:adaptdiverse}
to adaptively choose $s$,
$\delta$, and $\zeta^k$ simultaneously, given a good prime $p$.

\begin{algorithm}[Htbp]
\caption{Adaptive diversification\label{alg:adaptdiverse}}
\KwIn{$\epsilon$-approximate black box for $f$,
known good prime $p$, known sparsity $t$}
\KwOut{$\zeta,k$ such that $f(\zeta^k x)$ is $\epsilon$-diverse, or
\texttt{FAIL}}
$s \gets 1$,\qquad $\delta \gets \infty$,\qquad $f_p\gets 0$ \; 
\While{$s \leq t^2$ and 
       $\#\{\text{coeffs }c\text{ of }f_s\text{ s.t. }\abs{c}\geq\delta\}<t$}{
  $s \gets$ least prime $\geq 2s$ \;
  $\zeta \gets \exp(2\pi\ii/s)$ \;
  $k \gets$ random integer in $\{0,1,\ldots,s-1\}$ \label{step:adaptA}\;
  Compute $f_s = f(\zeta^k x) \rem x^p-1$ \;
  $\delta \gets$ least number s.t.\ all coefficients of $f_s$ at least $\delta$
  in absolute value are pairwise $\epsilon$-distinct \label{step:adaptB}\;
}
\lIf{$\delta > 2\epsilon$}{\KwRet{\texttt{FAIL}}}\;
\lElse{\KwRet{$\zeta^k$}}
\end{algorithm}

Suppose there exists a threshold $S\in\NN$ such that for all primes $s>S$, a
random $s$th primitive root of unity $\zeta^k$ makes $f(\zeta^k x)$
$\epsilon$-diverse with high probability. Then
Algorithm~\ref{alg:adaptdiverse} will return a root of unity whose order
is within a constant factor of $S$, with high probability. From the
previous theorem, if such an $S$ exists it must be $O(t^2)$, and hence
the number of iterations required is $O(\log t)$.

Otherwise, if no such $S$ exists, then we cannot diversify the
polynomial. Roughly speaking, this corresponds to the situation that
$f$ has too many coefficients with absolute value
close to the machine precision.
In this case, we can simply use the algorithm of
\cite{GS09} numerically, achieving the same stability but using a
greater number of evaluations and bit operations.
It is possible to
establish an adaptive hybrid between our algorithm and that of
\cite{GS09} by making $f$ as $\epsilon$-diverse \emph{as possible}
given our precision.  The non-zero coefficients of $f$ are clustered
into groups which are not $\epsilon$-diverse (i.e., are within
$\epsilon\norm{f}$ of each other).  We can use the symmetric
polynomial reconstruction of \cite{GS09} to extract the exponents within
each group.

\subsection{Approximate interpolation algorithm}

We now plug our $\epsilon$-approximate remainder black box, and 
method for making $f$ $\epsilon$-diverse, into our generic Algorithm
\ref{alg:generic} to complete our algorithm for approximate
interpolation.

\begin{theorem}
  Let $\delta>0$,
  $f\in\CC[x]$ with degree at most $D$ and sparsity at most $T$, 
  and suppose all nonzero
  coefficients $c$ of $f$ satisfy $\abs{c}>\delta\norm{f}$.
  Suppose also that $\epsilon<1.5\delta/(T(T-1))$, and we are given
  an $\epsilon$-approximate black box for $f$.
  Then, for any $\mu<1/2$ we have an algorithm to
  produce a $g\in\CC[x]$ satisfying the conditions of
  Theorem~\ref{thm:posed}. The algorithm succeeds with probability at
  least $1-\mu$ and uses $\softO(T^2\cdot \log(1/\mu)\cdot \log^2 D)$ 
  black box evaluations and floating point operations.
\end{theorem}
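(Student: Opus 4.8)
The plan is to assemble the pieces already developed in this section into an instance of the generic Algorithm~\ref{alg:generic}, then track how the numerical errors propagate and invoke Theorem~\ref{thm:posed} to certify the output. First I would set up the parameters: let $t\leq T$ be the true sparsity discovered in the first loop, take $\lambda$ as in Algorithm~\ref{alg:generic} (of size $\softO(T^2\log D)$), and use the $\epsilon$-approximate remainder black box of Algorithm~\ref{alg:apfrem} in place of the exact remainder black box. Since each modular image is computed with relative error at most $\epsilon$ in the coefficient $2$-norm (Theorem~\ref{thm:apfremworks}), and since $p$ is a good prime so that $\norm{f\rem x^p-1}=\norm{f}$, every image $h$ satisfies $\norm{h-(f\rem x^p-1)}\leq\epsilon\norm{f}$. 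The hypothesis $\epsilon<1.5\delta/(T(T-1))$ is exactly what Theorem~\ref{thm:epsdiv} needs (with $s$ a prime in the range $t(t-1)\leq s\leq 3.1\delta/\epsilon$): it guarantees that a random $\alpha=\zeta^k$ makes $f(\alpha x)$ $\epsilon$-diverse with probability at least $1/2$, so $O(\log(1/\mu))$ trials suffice to diversify with probability at least $1-\mu/3$.

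Next I would run the diversified polynomial through the two loops of Algorithm~\ref{alg:generic}. In the first loop we find the sparsity and a good prime; by Lemma~\ref{lem:goodp} each prime is good with probability $\geq1/2$, so $\lceil\log_2(3/\mu)\rceil$ draws give failure probability $\leq\mu/3$. In the second loop we collect enough good primes that their product exceeds $D$; the Hoeffding argument from the proof of Theorem~\ref{thm:genalg} bounds that failure probability by $\mu/3$. The one genuinely new wrinkle is the correlation step: instead of matching coefficients exactly, we match them approximately, identifying the coefficient of each term $c_i$ in one modular image with the one in another image that is within, say, $\epsilon\norm{f}$; because $f(\alpha x)$ is $\epsilon$-diverse and all its coefficients have magnitude at least roughly $\delta\norm{f}\gg\epsilon\norm{f}$, this matching is unambiguous, and Chinese remaindering then recovers each exponent $e_i$ exactly. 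Undoing the substitution, $c_i\mapsto c_i\alpha^{-e_i}$, is numerically harmless because $\abs{\alpha}=1$.

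For the error bound on the output, observe that the returned $g=\sum c_i x^{e_i}$ has exactly the correct support $\{e_1,\dots,e_t\}$ (the exponents are recovered exactly), and each coefficient is read off from a modular image that approximates $f\rem x^p-1$ to within $\epsilon\norm{f}$ in $2$-norm; since the image is diverse and good, the coefficient vector of $g$ differs from that of $f$ by at most $\epsilon\norm{f}$, i.e.\ $\norm{f-g}\leq\epsilon\norm{f}\leq\delta\norm{f}$. This $g$ is $T$-sparse with support contained in that of $f$, so it meets the requirements of the problem definition, and in particular the hypotheses of Theorem~\ref{thm:posed} hold (for any good prime $p$ and $p$th primitive root of unity the relative evaluation bound is satisfied), so the output is certified. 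For the cost, the dominant term is the second loop: $O(\log(1/\mu)+(\log D)/(\log\lambda))$ black box calls, each of which computes $p$ evaluations and an FFT in $O(p\log p)$ operations by Theorem~\ref{thm:apfremworks}, with $p=\Theta(\lambda)=\softO(T^2\log D)$; multiplying out and absorbing logarithmic factors gives $\softO(T^2\log(1/\mu)\log^2 D)$ black box evaluations and floating point operations, as claimed.

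The main obstacle is not any single hard estimate but the bookkeeping around the approximate correlation step: one must verify that the gap between the diversity threshold $\epsilon\norm{f}$ and the coefficient-magnitude floor $\delta\norm{f}$ (together with the accumulated FFT error $\epsilon\norm{f}$ per image) really does leave the nearest-coefficient matching unambiguous across all sampled good primes simultaneously, and that no spurious "extra'' terms of tiny magnitude in a modular image get mistaken for genuine terms. The condition $\epsilon<1.5\delta/(T(T-1))$ is what keeps all of these slacks comfortably separated, and Theorem~\ref{thm:posed} is exactly the tool that turns the resulting coefficient-norm bound into the promised relative-error guarantee on $g$.
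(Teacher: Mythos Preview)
Your proposal is correct and follows essentially the same approach as the paper: construct the approximate remainder black box via Algorithm~\ref{alg:apfrem}, plug it into Algorithm~\ref{alg:generic}, and handle step~\ref{step:diversify} by repeated trials using Theorem~\ref{thm:epsdiv} (the paper wraps this in Algorithm~\ref{alg:adaptdiverse}, but the substance is identical), then cite Theorems~\ref{thm:genalg} and~\ref{thm:apfremworks} for the cost. In fact you supply considerably more detail than the paper's own proof, which is a terse two-sentence sketch; in particular your discussion of the approximate coefficient-matching step and your flagging of the gap-versus-error bookkeeping as the delicate point are exactly right and go beyond what the paper spells out.
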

\begin{proof}
  Construct an approximate remainder black box for $f$ using
  Algorithm~\ref{alg:apfrem}. Then run Algorithm~\ref{alg:generic} using
  this black box as input.
  On step~\ref{step:diversify} of Algorithm~\ref{alg:generic}, 
  run Algorithm~\ref{alg:adaptdiverse},
  iterating steps~\ref{step:adaptA}--\ref{step:adaptB} 
  $\lceil \log_2 (3/\mu) \rceil$ times on each iteration through 
  the while loop to choose a diversifying $\alpha=\zeta^k$ with 
  probability at least $1 - \mu/3$.

  The cost comes from Theorems \ref{thm:genalg} and
  \ref{thm:apfremworks} along with the previous discussion and
  Theorem~\ref{thm:epsdiv}.
\end{proof}

Observe that the resulting algorithm is Monte Carlo, but could be made
Las Vegas by combining the finite fields zero testing algorithm
discussed in Section~\ref{sec:ffver} with the guarantees of
Theorem~\ref{thm:posed}.


\section{Implementation results}
\label{sec:impl}

We implemented our algorithms in C++ using the GNU Multiple Precision
Arithmetic Library (GMP, \url{http://gmplib.org/})
and Victor Shoup's Number Theory Library (NTL,
\url{http://www.shoup.net/ntl/})
for the exponent arithmetic. For comparison with the algorithm of
\cite{GS09}, we also used NTL's squarefree polynomial factorization
routines. We note that, in our experiments, the cost of integer polynomial
factorization (for Garg \& Schost) and Chinese remaindering were
always negligible.

\begin{table}[htbp]
\begin{center}\begin{tabular}{r|r|r|r|r|r}
  $\log_2 D$ & $T$ & Determ & G\&S MC & Alg 1 & Alg 1++ \\
  \hline
  12 & 10 & 3.77 & 0.03 & 0.03 & 0.01 \\
  16 & 10 & 46.82 & 0.11 & 0.11 & 0.08 \\
  20 & 10 & --- & 0.38 & 0.52 & 0.33 \\
  24 & 10 & --- & 0.68 & 0.85 & 0.38 \\
  28 & 10 & --- & 1.12 & 2.35 & 0.53 \\
  32 & 10 & --- & 1.58 & 2.11 & 0.66 \\
  \hline
  12 & 20 & 37.32 & 0.15 & 0.02 & 0.02 \\
  16 & 20 & --- & 0.91 & 0.52 & 0.28 \\
  20 & 20 & --- & 3.5 & 3.37 & 1.94 \\
  24 & 20 & --- & 6.59 & 5.94 & 2.99 \\
  28 & 20 & --- & 10.91 & 10.22 & 3.71 \\
  32 & 20 & --- & 14.83 & 16.22 & 4.24 \\
  \hline
  12 & 30 & --- & 0.31 & 0.01 & 0.01 \\
  16 & 30 & --- & 3.66 & 1.06 & 0.65 \\
  20 & 30 & --- & 10.95 & 6.7 & 3.56 \\
  24 & 30 & --- & 25.04 & 12.42 & 9.32 \\
  28 & 30 & --- & 38.86 & 19.36 & 13.8 \\
  32 & 30 & --- & 62.53 & 68.1 & 14.66 \\
  \hline
  12 & 40 & --- & 0.58 & 0.01 & 0.02 \\
  16 & 40 & --- & 8.98 & 3.7 & 1.54 \\
  20 & 40 & --- & 30.1 & 12.9 & 8.42 \\
  24 & 40 & --- & 67.97 & 38.34 & 16.57 \\
  28 & 40 & --- & --- & 73.69 & 36.24 \\
  32 & 40 & --- & --- & --- & 40.79
\end{tabular}\end{center}
\caption{Finite Fields Algorithm Timings\label{table:ff}}
\end{table}

In our timing results, ``Determ'' refers to the
deterministic algorithm as stated in \cite{GS09} and ``Alg 1'' is the
algorithm we have presented here over finite fields, without the
verification step. We also developed and implemented a
more adaptive, Monte Carlo version of these algorithms, as briefly
described at the end of Section~\ref{sec:generic}. The basic idea
is to sample modulo $x^p-1$ for just one prime $p\in\Theta(t^2\log d)$
that is good with high probability, then to search for much smaller good
primes. This good prime search starts at a lower bound of order $\Theta(t^2)$
based on the birthday problem, and finds consecutively larger primes
until enough primes have been found to recover the symmetric polynomial
in the exponents (for Garg \& Schost) or just the exponents (for our
method). The corresponding improved algorithms are referred to as ``G\&S
MC'' and ``Alg 1++'' below, respectively.

Table~\ref{table:ff} summarizes some timings for these four algorithms
over the finite field $\ZZ/65521\ZZ$. 
This modulus was chosen for convenience of implementation, although other
methods such as the Ben-Or and Tiwari algorithm might be more efficient in
this particlar field since discrete logarithms could be computed quickly.
However, observe that our algorithms (and those from \citeauthor{GS09})
have only poly-logarithmic dependence on the field size, and so will
eventually dominate.

The timings are given in seconds
of CPU time on a 64-bit AMD Phenom II 3.2GHz processor with 512K/2M/6M cache,
compiled using GCC 4.4.3 with the \verb|-O3| flag.
Note that the numbers listed reflect the \emph{base-2 logarithm} of the
degree bound and the sparsity bound for the randomly-generated test
cases.

The timings are mostly as expected based on our complexity estimates,
and also confirm our suspicion that primes of size $O(t^2)$ 
are sufficient to avoid exponent collisions.
It is satisfying but not particularly surprising to see that our 
``Alg 1++'' is the fastest on all inputs, as all the algorithms have a
similar basic structure. Had we compared to the Ben-Or and Tiwari or
Zippel's method, they would probably be more efficient for small sizes,
but would be easily beaten 
for large degree and arbitrary finite fields as their costs are 
super-polynomial.

\begin{table}[Htbp]
\begin{center}\begin{tabular}{r|r|r|r}
  Noise & Mean Error & Median Error & Max Error\\
  \hline
  $0\phantom{1^{-12}}$ & \expnot{4.440}{16} & \expnot{4.402}{16} & \expnot{8.003}{16} \\
  $\pm 10^{-12}$ & \expnot{1.113}{14} & \expnot{1.119}{14} & \expnot{1.179}{14} \\
  $\pm 10^{-9\phantom{1}}$ & \expnot{1.149}{11} & \expnot{1.191}{11} & \expnot{1.248}{11} \\
  $\pm 10^{-6\phantom{1}}$ & \expnot{1.145}{8} & \expnot{1.149}{8} & \expnot{1.281}{8}
\end{tabular}\end{center}
\caption{Approximate Algorithm Stability\label{table:approx}}
\end{table}

The implementation of the approximate algorithm uses machine
\texttt{double} precision (IEEE), the built-in C++
\texttt{complex<double>} type, and the popular 
Fastest Fourier Transform in the West (FFTW, \url{http://www.fftw.org/})
package for
computing FFTs. Our stability results are summarized in
Table~\ref{table:approx}. Each test case was randomly generated with
degree at most $2^{20}$ and at most 50 nonzero terms. 
We varied the precision as specified in the table and ran 10 tests in
each range. Observe that the error in our results was often \emph{less}
than the $\epsilon$ error on the evaluations themselves.

Both implementations are released under an MIT-style licence and are
available from the second author's website at\\
\url{http://www.cs.uwaterloo.ca/~droche/diverse/}.


\section{Conclusions}

We have shown how to use the idea of diversification to improve the
complexity of sparse interpolation over large finite fields by a factor
of $t$, the number of nonzero terms. We achieve a similar complexity for
approximate sparse interpolation, and provide the first provably
numerically stable algorithm for this purpose. Our experiments confirm
these theoretical results.

Numerous open problems remain. A primary shortcoming of our algorithms
is the quadratic dependence on $t$, as opposed to linear in the case
of dense interpolation or even sparse interpolation in smaller or
chosen finite fields using the Ben-Or and Tiwari algorithm. It seems
that reducing this quadratic dependency will not be possible without a
different approach, because of the birthday problem embedded in the
diversification step. 
In the approximate case, a provably numerically stable algorithm for
sparse interpolation with only $O(t)$ probes is still an open
question.  And, while general backward error stability is not possible
in the high degree case, it would be interesting in the case of low
degree and many variables.


\section*{Acknowledgements}

We thank Reinhold Burger and \'Eric Schost for pointing out an error in
an earlier draft. The comments and suggestions of the anonymous referees
were also very helpful, in particular regarding connections to previous
results and the proof of Theorem~\ref{thm:diverseff}.

\bibliographystyle{plainnat}
\bibliography{interp}

\end{document}